\newcommand{\mycomment}[1]{\tcp{\texttt{ #1}}}
\newtheorem{theorem}{Theorem}
\newtheorem{corollary}[theorem]{Corollary}
\newtheorem{lemma}[theorem]{Lemma}
\newtheorem{remark}[theorem]{Remark}
\newtheorem{definition}[theorem]{Definition}
\newcommand{\1}{\mathbbm{1}} 
\newcommand{\R}{\mathbb{R}} 
\renewcommand{\P}[1]{\mathbb{P}\left(#1 \right)} 
\newcommand{\E}[1]{\mathbb{E}\left[ #1 \right]} 
\newcommand{\eps}{\varepsilon} 
\newcommand{\aln}[1]{\begin{align}  #1 \end{align}}
\newcommand{\alns}[1]{\begin{align*}  #1 \end{align*}}
\newcommand{\pbr}[1]{\left( #1 \right)} 
\newcommand{\sbr}[1]{\left[ #1\right]}
\newcommand{\cbr}[1]{\left\{ #1\right\}}
\newcommand{\Bin}{\text{Bin}}
\newcommand{\Ht}{\widetilde{\mathcal{H}}}
\newcommand{\ct}{\widetilde{c}}
\newcommand{\bfP}{\mathbf{P}}
\newcommand{\bfQ}{\mathbf{Q}}
\newcommand{\indicator}{\mathbbm{1}}
\newcommand{\pihat}{\widehat{\pi}}
\newcommand{\pihatMLE}{\widehat{\pi}^{\text{MLE}}}
\newcommand{\pistar}{\pi^*}
\newcommand{\muhat}{\widehat{\mu}}
\newcommand{\nuhat}{\widehat{\nu}}
\newcommand{\tmu}{\widetilde{\mu}}
\newcommand{\tN}{\widetilde{N}}
\newcommand{\tZ}{\widetilde{Z}}
\newcommand{\Bern}{\mathrm{Bern}}
\renewcommand{\deg}{\delta}
\newcommand{\core}{\mathrm{core}}
\newcommand{\dom}{\mathrm{dom}}
\newcommand{\newp}{q}
\newcommand{\newd}{r}
\newcommand{\newD}{R}
\newcommand{\calH}{\mathcal{H}}
\newcommand{\IHG}{\mathrm{RG}}
\renewcommand{\indicator}[1]{\bm{1}_{\{ #1 \}}}
\definecolor{warningcol}{rgb}{.99,.1,.5}
\definecolor{todocol}{rgb}{.4,.4,.8}
\definecolor{sketchcol}{rgb}{.4,.4,.8}
\definecolor{outlinecol}{rgb}{.8,.4,.3}
\DeclareMathOperator*{\argmax}{arg\,max}
\DeclareMathOperator*{\argmin}{arg\,min}
\begin{document}

\title{Aligning Multiple Inhomogeneous Random Graphs: Fundamental Limits of Exact Recovery}
\author{Taha Ameen and Bruce Hajek 
}

\date{\today}

\markboth{Aligning Multiple Inhomogeneous Random Graphs: Fundamental Limits of Exact Recovery}{}

\maketitle

\begin{abstract}
    This work studies fundamental limits for recovering the underlying correspondence among \textit{multiple} correlated graphs. In the setting of inhomogeneous random graphs, we present and analyze a matching algorithm: first partially match the graphs pairwise and then combine the partial matchings by transitivity. Our analysis yields a sufficient condition on the problem parameters to exactly match all nodes across all the graphs. In the setting of homogeneous (Erd\H{o}s-R\'{e}nyi) graphs, we show that this condition is also necessary, i.e. the algorithm works down to the information theoretic threshold. This reveals a scenario where exact matching between two graphs alone is impossible, but leveraging more than two graphs allows exact matching among all the graphs. Converse results are also given in the inhomogeneous setting and transitivity again plays a role. Along the way, we derive independent results about the $k$-core of inhomogeneous random graphs.
\end{abstract}

\tableofcontents

\section{Introduction}\label{sec-introduction}
    
The information age has ushered an abundance of correlated networked data. For instance, the network structure of two social networks such as Facebook and Twitter is correlated because users are likely to connect with the same individuals in both networks. This wealth of correlated data presents both opportunities and challenges. On one hand, information from various datasets can be combined to increase the fidelity of data, which translates to better performance in downstream tasks. On the other hand, the interconnected nature of this data also raises privacy and security concerns. Linkage attacks, for instance, exploit correlated data to identify individuals in an anonymized network by linking to other sources~\citep{narayanan2009deanonymizing}.

The problem of graph alignment is to recover the underlying latent node correspondence between correlated networks, i.e. to match the nodes across graphs. The problem derives its importance from the fact that nodes in networks are often \textit{unlabeled} or \textit{scrambled} -- for example, user identities in social networks are anonymized for privacy. Applications include:

\begin{enumerate}
    \item Social networks: Matching two social networks amounts to recovering the hidden node identities of users in the anonymized network. Narayanan and Shmatikov~\citep{narayanan2009deanonymizing} established the efficacy of this approach: A third of the users active on both Twitter and Flickr could be identified with only a 12\% error rate, simply by matching the anonymous Twitter graph to the Flickr graph. In a related work~\citep{narayanan2008robust}, they also showed that anonymous movie ratings of Netflix users can be recovered by matching to the public Internet Movie Database (IMDb).
    \item Biological networks: The protein-protein interactomes of two different species are often correlated networks. Singh and co-authors~\citep{singh2008global} showed that matching the PPI network of yeast to that of fruit flies allowed identification of conserved functional components between the two species. This transfer of knowledge between species is a useful tool in biology and genomics~\citep{kazemi2016proper, bandyopadhyay2006systematic}. Similarly, the brain connectomes of healthy humans are correlated~\citep{sporns2005human}, and their alignment is useful in detecting abnormalities~\citep{calissano2024graph}.
    \item Natural language processing: Given a passage of text and a hypothesis, consider the objective of determining whether the hypothesis is entailed by the passage. Haghighi and co-authors~\citep{haghighi2005robust} presented an approach to solving this, by first constructing dependency trees of the text and hypothesis respectively, and then matching them to conclude with a yes/no answer after thresholding a score of the output matching. 
    \item Computer vision: Graph matching is employed extensively for vision tasks. One example is object detection: Schellewald and Schn\"{o}rr~\citep{schellewald2005probabilistic} showed that objects in images can be detected by matching graph representations of the test image to a feature image that contains the object. Yet another example in pattern recognition is the problem of tracking the movement of objects in a sequence of frames~\citep{yan2016short}.
    \item Autonomous agents: Robots and self-driving cars rely on a multitude of sensors to collect data about their environment. Combining information from various sensors increases data fidelity. For example, Ye and co-authors~\citep{ye2022multi} showed that graph alignment is useful for scene-map construction when multiple robots explore an unknown environment.
\end{enumerate}

There is a long history of utilizing domain knowledge to implement graph alignment algorithms for these tasks -- see for example~\citep{yan2016short, vento2013graph}. Even so, the theoretical study of graph matching algorithms and their performance guarantees is a relatively recent undertaking. The vast majority of work has focused on pairs of random graphs, particularly Erd{\H{o}}s-R{\'e}nyi graphs. Presently, we study exact recovery thresholds with multiple inhomogeneous random graphs.

\subsection{Related Work}
The theoretical study of graph matching algorithms and their performance guarantees has focused on the homogeneous setting of Erd\H{o}s-R\'{e}nyi (ER) graphs. Pedarsani and Grossglauser~\citep{pedarsani2011privacy} introduced the subsampling model to generate two such correlated graphs: the model entails twice subsampling each edge independently from a parent ER graph to obtain two sibling graphs, both of which are marginally ER graphs themselves. The goal is then to match nodes between the two graphs to recover the underlying latent correspondence. This has been the framework of choice for many works that study graph matching. For example, Cullina and Kiyavash~\citep{cullina2016improved,cullina2017exact} studied the problem of \textit{exactly matching} two ER graphs, where the objective is to match \textit{all} vertices correctly. They identified a threshold phenomenon for this task: exact recovery is possible if the problem parameters are above a threshold, and impossible otherwise. Subsequently, threshold phenomena were also identified for \textit{partial} graph matching between ER graphs, where the objective is to match only a positive fraction of nodes~\citep{ganassali2021impossibility, hall2023partial, wu2022settling, ding2023densesubgraph, du2025optimal}. The case of almost-exact recovery -- where the objective is to match all but a negligible fraction of nodes -- was studied by Cullina and co-authors~\citep{cullina2019kcore}: a necessary condition for almost exact recovery was identified, and the same condition was shown to also sufficient be for the \textit{$k$-core estimator}; the estimator is described  in Section~\ref{sec-results}. This estimator proved useful for graph matching in other contexts such as the stochastic block model~\citep{gaudio2022exact} and inhomogeneous random graphs~\citep{racz2023matching}. Ameen and Hajek~\citep{ameen2023robust} showed some robustness properties of the $k$-core estimator in the context of matching ER graphs under node corruption. The estimator also plays an important role in the present work.

Other works have studied \textit{computational} aspects of graph alignment: Various \textit{efficient} algorithms have been proposed, including algorithms based on the spectrum of the graph adjacency matrices~\citep{fan2022spectral}, node degree and neighborhood based algorithms~\citep{dai2019canonicallabeling, ding2021degreeprofile, mao2023constant} as well as algorithms based on iterative methods~\citep{ding2023polynomial} and counting subgraphs~\citep{mao2023chandelier, barak2019nearly, chai2024efficient}. While all these works study Erd\H{o}s-R\'{e}nyi graphs, there is recent interest in computational aspects of graph alignment in other models such as the stochastic block model~\citep{chai2024efficient}, as well as inhomogeneous random graphs~\citep{ding2023efficiently}. Still other works have studied the graph matching problem under side information -- this includes settings where a subset of nodes are matched a priori and provided as seeds~\citep{pedarsani2011privacy, mossel2020seeded, lubars2018correcting}, the setting where node attributes are present~\citep{wang2024feasible, zhang2024attributed,yang2024exact}, and the setting where correlation is localized to a subset of nodes~\citep{huang2024information}.

Incorporating information from \textit{multiple} graphs to match them has been recognized as an important research direction, for instance in the work of Gaudio and co-authors~\citep{gaudio2022exact}. For example, a user is active, on average, on 6.7 social networks each month~\citep{GWI}. Similarly, reconciling protein-protein interaction networks among \textit{multiple} species is an important problem in computational biology~\citep{singh2008global}. These applications also necessitate going beyond ER graphs, since many real-world networks lack the homogeneity of the ER model. To our knowledge, a very limited number of papers consider matchings among multiple graphs: The works~\citep{josephs2021recovery} and~\citep{racz2021correlated} have different objectives, and note that it is possible to exactly match $m$ graphs whenever it is possible to exactly match any two graphs by pairwise matching all the graphs exactly. In contrast, we show that under appropriate conditions, it is possible to exactly match $m$ graphs even when no two graphs can be pairwise matched exactly. The work of Rácz and Zhang~\citep{racz2024harnessing} studies the interplay between graph matching and community detection for multiple correlated stochastic block models. The recent work of Vassaux and Massouli\'{e}~\citep{vassaux2025feasibility} studies the information-theoretic limit for exact recovery in the \textit{Gaussian} model, where the observations are multiple correlated Wigner matrices, and also the limit for partial recovery in the homogeneous Erd\H{o}s-R\'{e}nyi setting. Closest to our work is the work of R\'{a}cz and Sridhar~\citep{racz2023matching}, where sufficeint conditions for exact and partial graph matching between \textit{two} inhomoegenous random graphs are studied.

\subsection{Contributions}
The present work studies fundamental limits of exact recovery with multiple inhomogeneous random graphs. These graphs allow for heterogeneity by having each edge $\{i,j\}$ in the graph appear independently with probability $p_{ij}$, and subsume other models such as ER graphs and stochastic block models, and are closely related to random geometric graphs. 

We consider the generalization of the matching problem to $m$ graphs and propose an algorithm to combine information among the graphs. A key idea is that of \textit{transitive closure} -- our algorithm works by pairwise matching graphs partially and then boosting the matchings by transitivity. Our analysis yields a sufficient condition for exact recovery of multiple inhomogeneous graphs. In the homogeneous case where $p_{ij} = p$, we show that this condition is also necessary, thereby characterizing the sharp reconstruction threshold for exactly matching multiple Erd\H{o}s-R\'{e}nyi graphs. In the heterogeneous setting, we also provide converse results and transitive closure plays a role here as well. We present simulation results to corroborate the utility of transitive closure as a black-box, efficient and optimal bridge between the binary and $m$-ary matching problems.

\section{Model} \label{sec-model}
    Let $\bfP  = (p_{ij})_{i,j\in [n]}$ with $0\leq p_{ij}=p_{ji} \leq 1$ denote a symmetric matrix with zero diagonals. A graph $G$ is an inhomogeneous random graph on the vertex set $[n] \triangleq \{1,2,\cdots,n\}$ with parameter matrix $\bfP$, denoted by $G\sim \IHG(n,\bfP )$, if each edge $\{i,j\}$ with $i<j$ exists independently with probability $p_{ij}$. Consider the natural extension of the subsampling model to $m$ graphs. Let $s \in (0,1]$, and obtain $m$ correlated inhomogeneous graphs $G_1',\cdots,G_m'$ by subsampling each edge of a parent graph $G\sim\IHG(n,\bfP )$ independently with probability $s$. These sibling graphs are themselves inhomogeneous random graphs with parameter matrix $\bfP s$. Let $\pistar_{12},\cdots,\pistar_{1m}$ denote independent permutations that are each drawn uniformly at random from the set of all permutations on $[n]$. Obtain the graphs $G_2,\cdots,G_m$ by permuting the nodes of $G_k'$ according to $\pistar_{1k}$, for each $k=2,\cdots,m$. For example, $\{i,j\}$ is an edge in $G_k'$ if and only if $\{\pistar_{1k}(i),\pistar_{1k}(j)\}$ is an edge in $G_k$. An illustration of the subsampling model is provided in Figure~\ref{fig: model}.

\begin{figure}[t]
    \centering
    \includegraphics[width=0.99\linewidth]{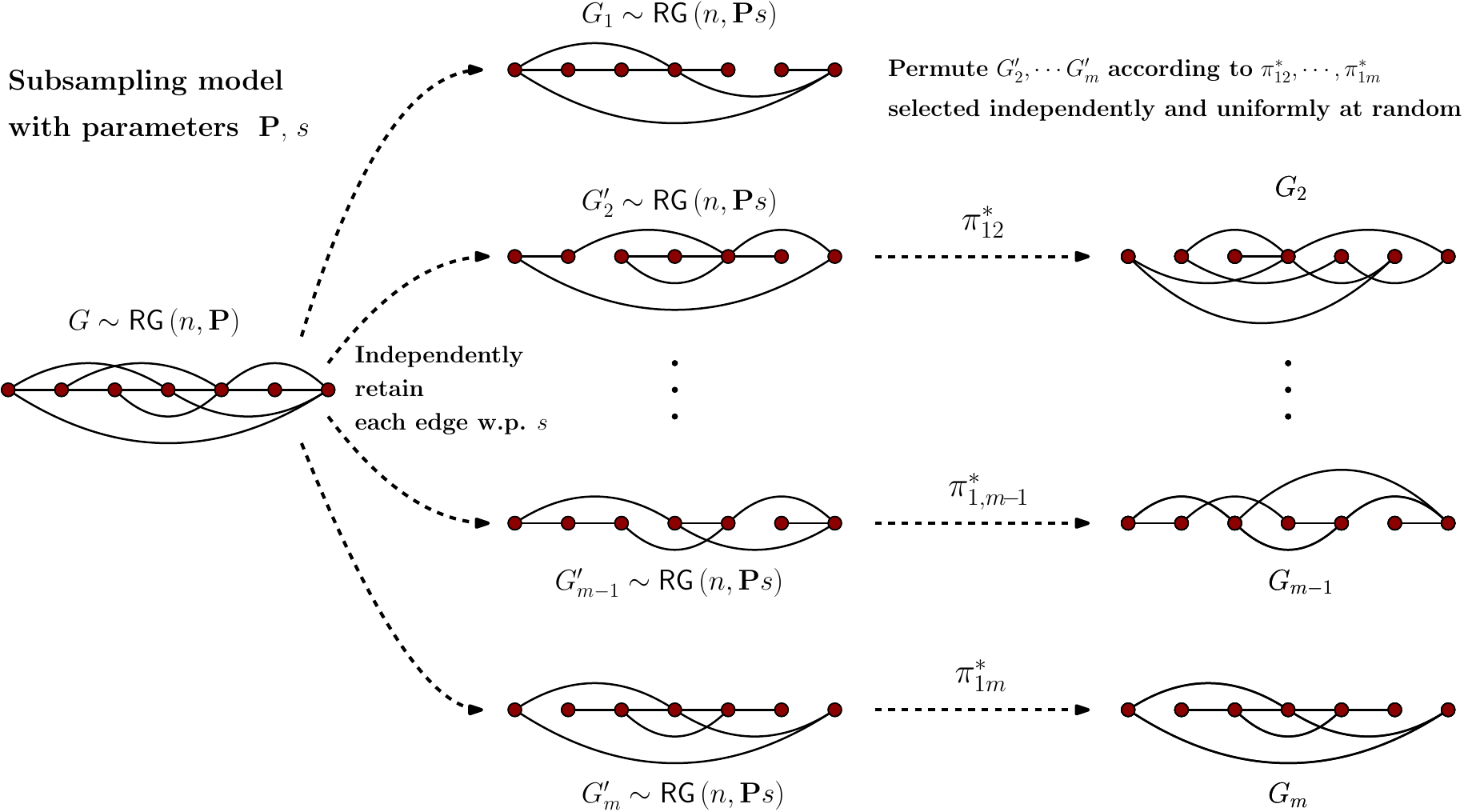}
    \caption{Obtaining $m$ correlated inhomogeneous graphs via the subsampling model.}
    \label{fig: model}
\end{figure}

We will use the tuple $(\pi_{12},\cdots,\pi_{1m})$ of permutations to identify the correspondence between all $m$ graphs. Stated thus, the graph $G_1$ is used as a reference and so we will interchangeably use $G_1$ and $G_1'$. Note that the underlying correspondence between all the graphs is fixed upon fixing $\pistar_{12},\cdots,\pistar_{1m}$: for any two graphs $G_k$ and $G_{\ell}$, their correspondence is given by $\pistar_{k\ell} \triangleq \pistar_{1\ell}  \circ (\pistar_{1k})^{-1}$.

In this work, we are concerned with \textit{exact recovery}. The goal is to determine necessary and sufficient conditions on the parameters $\bfP$, $s$ and $m$ so that given correlated inhomogeneous graphs $G_1,\cdots,G_m$ from the subsampling model, it is possible to exactly recover the underlying correspondences $\pistar_{12},\cdots,\pistar_{1m}$ with probability $1-o(1)$.

Formally, a \textit{matching} $(\mu_{12},\cdots,\mu_{1m})$ is a collection of injective functions with domain $\dom(\mu_{1i}) \subseteq V$ for each $i$, and co-domain $V$. An \textit{estimator} is simply a mechanism to map any collection of graphs $(G_1,\cdots, G_m)$ to a matching. We say that an estimator \textit{completely} matches the graphs if the output mappings $\mu_{12},\cdots \mu_{1m}$ are all complete, i.e. they are all permutations on $[n]$. If the matching $(\mu_{12},\cdots,\mu_{1m})$ satisfies $\mu_{1j} = \pistar_{1j}$ for all $2 \leq j \leq m$, then all nodes are correctly matched across all graphs, i.e. the graphs are exactly matched.

\paragraph{Some special cases of inhomogeneous random graphs}

The inhomogeneous random graph model captures a variety of generative models. Following~\citep{racz2023matching}, we introduce these models conditioned on latent variables intrinsic to them for convenience.
\begin{enumerate}
    \item Erd\H{o}s-R\'{e}nyi graph: Given a parameter $p \in [0,1]$, the homogeneous case of ER graphs sets $p_{ij} = p$ for all distinct $i$ and $j$.
    \item Stochastic block model (SBM)~\citep{abbe2018community, holland1983stochastic}: Given a partition $V_1,\cdots,V_M$ of the node set $[n]$ into $M$ communities and edge probabilities $(q_{ab})_{1\leq a < b \leq M}$, the SBM sets $p_{ij} = q_{ab}$ for all distinct $i$ and $j$ such that $i \in V_a$ and $j \in V_b$.
    \item Random geometric graph (RGG)~\citep{liu2023phase}: Given a collection $x_1,\cdots, x_n$ of points in a set $S \subseteq \mathbb{R}^d$, and parameters $p \in [0,1]$ and $r > 0$, the RGG sets $p_{ij} = p \cdot \mathbf{1}_{\cbr{\| x_i - x_j\|_2 \leq r}}$ for distinct $i$ and $j$.
    \item Chung-Lu graph (CLG)~\citep{chung2002average}: Given a collection $(w_i)_{1\leq i \leq n}$ of positive weights satisfying $w_i \leq \sqrt{\sum_{j\in [n]}w_j}$ for all $i$, the CLG model sets $p_{ij} = \frac{w_i w_j}{\sum_{k\in[n]}w_k}$ for distinct $i$ and $j$.
\end{enumerate}
Correlated graphs from any of the above models are obtained via the subsampling model. 

\paragraph{Notation} Throughout this work, $p_{\max}$ denotes the largest entry in the matrix $\bfP$, i.e. $p_{\max} = \max_{i,j}p_{ij}$. For a graph $G$ and node $v$ in $G$, let $\delta_G(v)$ denote the degree of $v$. Given a matching $\muhat$, we denote by $G^{\mu}$ the graph on node set given by the image of $\mu$ obtained by relabeling the nodes of $G$ according to $\mu$, i.e. for $i,j \in \dom(\mu)$:
\[
\{i,j\} \in E(G) \iff \{\mu(i),\mu(j)\} \in E(G^\mu).
\]
For any two graphs $H_1$ and $H_2$ on the same vertex set $V$, denote by $H_1 \vee H_2$ their \textit{union graph} and by $H_1\wedge H_2$ their \textit{intersection graph}. An edge $\{i,j\}$ is present in $H_1\vee H_2$ if it is present in either $H_1$ or $H_2$. Similarly, the edge is present in $H_1\wedge H_2$ if it is present in both $H_1$ and $H_2$. For a random variable $X$, denote by $M_X(t)\triangleq \mathbb{E}[e^{tX}]$ the moment generating function of $X$. Standard asymptotic notation $(O(\cdot),o(\cdot),\cdots)$ is used throughout, and it is implicit that $n\to\infty$.

\section{Main Results and Algorithms} \label{sec-results}
    
\subsection{Exact recovery threshold: Positive results}

Our first result is a sufficient condition for exactly matching all nodes across all the graphs.

\begin{theorem} \label{thm: achievability-m-graphs}
    Let $d_i = \sum_{j \neq i}p_{ij}$. Suppose that there exist $\alpha\in (0,1]$ and $\eps > 0$ such that $p_{\max} = o(n^{\alpha/2-\eps-1})$ and
    \begin{align} \label{eq: suff-condition-m-graphs}
        \sum_{i=1}^n e^{-d_i s(1-(1-s)^{m-1})}= o(n^{-\alpha}).
    \end{align}
    There exists an estimator, namely Algorithm~\ref{Alg: Match} described in the next subsection, whose output $\pihat_{12},\cdots,\pihat_{1m}$ satisfies
    \[ 
    \P{\pihat_{12} = \pistar_{12},\cdots,\pihat_{1m} = \pistar_{1m}} = 1-o(1).
    \]
\end{theorem}
    
A few remarks are in order. 
\begin{itemize}
\item The sufficient condition~\eqref{eq: suff-condition-m-graphs} of Theorem~\ref{thm: achievability-m-graphs} can be slightly improved to the condition 
    \begin{align}
        \sum_{i=1}^{n} e^{-d_i s(1-(1-s)^{m-1})} - \max_{j\in [n]} e^{-d_j s(1-(1-s)^{m-1})} = o(n^{-\alpha}). \label{eq: stronger-suff-condition-m-graphs}    
    \end{align}
    Let $v^* \in \argmin_{i \in [n]} d_i$. It can be shown that Algorithm~\ref{Alg: Match} applied to the graphs $G_k$ yields that all nodes except $v^*$ can be exactly matched. The node $v^*$ is then the only unmatched node and can be manually matched without ambiguity.
    \item R\'{a}cz and Sridhar~\citep{racz2023matching} studied the graph alignment problem when $m=2$. Setting $m=2$ in~\eqref{eq: suff-condition-m-graphs} yields $\sum_{i=1}^n e^{-d_i s^2} = o(n^{-\alpha})$, which is equivalent to their sufficient condition.
    \item Algorithm~\ref{Alg: Match} does not require access to the parameter matrix $\bfP$.
\end{itemize}

The sufficient condition~\eqref{eq: suff-condition-m-graphs} can be specialized to various models that are captured by inhomogeneous random graphs. Some of these are presented next.

\begin{corollary}[Correlated SBMs] Let $G_1,\cdots,G_m$ be correlated SBMs obtained from the subsampling model. Suppose there exist $\alpha\in(0,1]$ and $\eps>0$ such that $q_{ab} = o(n^{\alpha/2-\eps-1})$ for all $a$ and $b$. Suppose that 
\[ 
\sum_{a=1}^M |V_a| \cdot \exp\pbr{- s\pbr{1-(1-s)^{m-1}} \cdot \sum_{b=1}^m |V_b| q_{ab}} = o(n^{-\alpha}).
\]
There is an estimator whose output $\pihat_{12},\cdots,\pihat_{1m}$ satisfies $\P{\pihat_{12} = \pistar_{12},\cdots,\pihat_{1m} = \pistar_{1m}} = 1-o(1)$.    
\end{corollary}

\begin{corollary}[Correlated RGGs] Let $G_1,\cdots,G_m$ be correlated RGGs obtained from the subsampling model. Suppose there exist $\alpha\in(0,1]$ and $\eps>0$ such that $p = o(n^{\alpha/2-\eps-1})$. For a node $i$, let $N_r(i)$ denote the number of nodes in its $r$-neighborhood, i.e. $N_r(i) \triangleq |\{j: \|x_i - x_j\|_2 \leq r\}|$. Suppose that 
\[ 
\sum_{i=1}^{n} e^{-p s(1-(1-s)^{m-1}) \cdot N_r(i)} = o(n^{-\alpha}).
\]
There is an estimator whose output $\pihat_{12},\cdots,\pihat_{1m}$ satisfies $\P{\pihat_{12} = \pistar_{12},\cdots,\pihat_{1m} = \pistar_{1m}} = 1-o(1)$.
\end{corollary}

\begin{corollary}[Correlated CLGs] Let $G_1,\cdots,G_m$ be correlated CLGs obtained from the subsampling model. Suppose there exist $\alpha\in(0,1]$ and $\eps>0$ such that $w_i = o(n^{\alpha/2-\eps-1}) \sqrt{\sum_{k \in [n]}w_k}$ for all $i$. Suppose that 
\[ 
\sum_{i=1}^{n}e^{-w_i s(1-(1-s)^{m-1})}= o(n^{-\alpha}).
\]
There is an estimator whose output $\pihat_{12},\cdots,\pihat_{1m}$ satisfies $\P{\pihat_{12} = \pistar_{12},\cdots,\pihat_{1m} = \pistar_{1m}} = 1-o(1)$.
\end{corollary}

Next, we show that the sufficient condition~\eqref{eq: stronger-suff-condition-m-graphs} can be improved for a class of parameter matrices if the algorithm has access to $P$.  

\begin{theorem} \label{thm:informative_P}
Suppose $\eps$ and $s$ are fixed with $\eps >0$ and $0< s \leq 1.$  If $d$ varies with $n$ such that:
\begin{align}
    ds \geq (1+\epsilon)\log n, \label{eq:ds_bound}
\end{align}
then there exists a parameter matrix $P$ with $d_i \triangleq \sum_{j\neq i}P_{ij} \leq d$ for $i\in [n]$ and $p_{\max} \triangleq \max_{i,j} P_{i,j} = \frac{\log^2(n)}{n}$ such that exact recovery is possible, i.e. $m$ graphs $G_1, G_2,\cdots, G_m$ obtained by subsampling each edge independently with probability $s$ from a parent graph $G \sim \IHG(n,P)$ can be exactly matched with probability $1-o(1)$ by some algorithm that has access to $P$.
\end{theorem}

Taking $ds=(1+\epsilon)\log n$ in~\Cref{thm:informative_P} yields the following corollary.
\begin{corollary}  \label{cor: informative_P}
There exists a parameter matrix $P$ satisfying
    \begin{align}
        \sum_{i=1}^n e^{-d_i s(1-2\eps)} - \max_{j\in [n]}e^{-d_j s(1-2\eps)} = \Omega(n^{\eps}),
    \end{align}
    such that exact recovery is possible by some algorithm that has access to $P$.
\end{corollary}

\begin{remark}
    \Cref{cor: informative_P} shows that if $(1-s)^{m-1} > 2\eps$, then the sufficient condition~\eqref{eq: stronger-suff-condition-m-graphs} of~\Cref{thm: achievability-m-graphs} does not hold for some $P$, but exact recovery is nevertheless possible. Roughly speaking, the matrix $P$ itself is similar to an additional graph that can help in matching by some algorithm that has access to $P$.
\end{remark}

\subsection{Exact recovery threshold: negative results}
While~\Cref{cor: informative_P} shows that~\eqref{eq: stronger-suff-condition-m-graphs} is not a sharp sufficient condition for general parameter matrices, we prove that this condition is sharp in the homogeneous case of Erd\H{o}s-R\'{e}nyi graphs. This provides the sharp threshold for exactly matching $m$ ER graphs.

\begin{theorem}  \label{thm: impossibility-m-graphs}
    Let $C>0$ and $p_{ij} = C \log n/n$ for each $i < j$, and let $d_i = \sum_{j \neq i} p_{ij}$. Suppose that
    \begin{align} \label{eq: nec-condition-m-graphs}
        \sum_{i=1}^n e^{-d_i s(1-(1-s)^{m-1})}= \Omega(1),
    \end{align}
    or equivalently that $Cs(1-(1-s)^{m-1}) < 1$. Then exact recovery is impossible, i.e. the output $\pihat_{12},\cdots,\pihat_{1m}$ of any estimator satisfies 
    \[\P{\pihat_{12} = \pistar_{12},\cdots,\pihat_{1m} = \pistar_{1m}} = 1 - \Omega(1).\] 
\end{theorem}

In the setting of ER graphs with $p_{ij} = C \log(n)/n$, Theorems~\ref{thm: achievability-m-graphs} and~\ref{thm: impossibility-m-graphs} establish that exact recovery is possible if and only if $Cs(1-(1-s)^{m-1}) > 1$. When $m=2$, this condition reduces to $Cs^2>1$, recovering the reconstruction threshold of Cullina and Kiyavash~\citep{cullina2017exact}. Thus, when $m>2$, there is a regime given by
\[Cs(1-(1-s)^{m-1})>1>Cs^2,\]
where exact recovery is impossible using only the two graphs but possible using all $m$ graphs. This is illustrated in Figure~\ref{fig: rec-regions}.


\begin{figure}[t]
    \centering
    \subfigure[$m=3$]{
    \includegraphics[width = 0.48\textwidth]{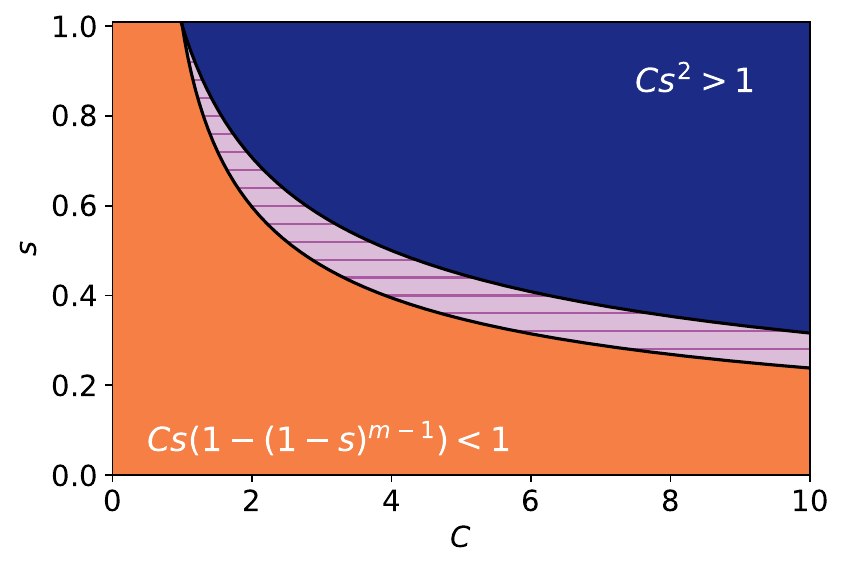}
    } \hfill
    \subfigure[$m=10$]{
    \includegraphics[width = 0.48\textwidth]{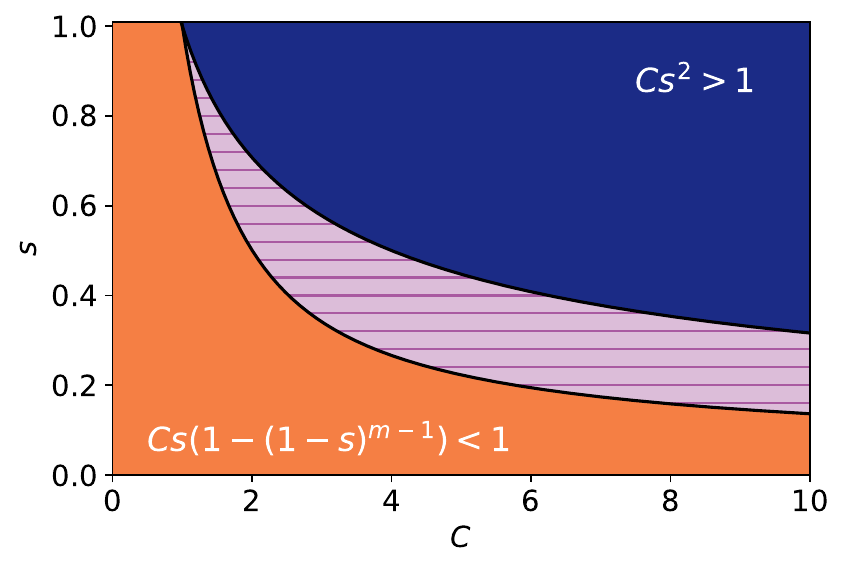}
    }
    \caption{Regions in parameter space for the homogeneous setting. \textit{Orange}: Exactly matching $m$ graphs is impossible even with $m$ graphs. \textit{Blue}: Exactly matching $2$ graphs is possible with $2$ graphs.  \textit{Striped}: Impossible to match $2$ graphs using only the $2$ graphs, but possible using $m$ graphs as side information.}
    \label{fig: rec-regions}
\end{figure}

We turn next to negative results in the heterogeneous case. To that end, consider the following lemma about isolated nodes in an inhomogeneous random graph.

\begin{lemma}
\label{lem: impossibility-two-graphs}
    Let $H\sim \IHG(n,\bfQ)$, and let $\newd_i = \sum_{j\neq i} \newp_{ij}$. Suppose that $\newp_{\max} = o(1/\sqrt{n})$ and 
    \begin{align}
    \pbr{\sum_{i=1}^{n} e^{-\newd_i}} - \max_{j\in [n]}e^{-\newd_j} = \Omega(1). \label{eq: nec-condition-2-graphs-general}
    \end{align}
    With probability bounded away from $0$, there are at least two isolated nodes in $H$.
\end{lemma}

\Cref{lem: impossibility-two-graphs} is proved in Appendix~\ref{apxsub: impossibility-two-graphs}. Applying it to $H= G_1$ yields a regime where there are at least two isolated nodes $u$ and $v$ in $G_1$, with probability bounded away from $0$. Since it is impossible to distinguish between $u$ and $v$ in $G_1$, it is impossible to match them with probability greater than $1/2$. Exactly matching the $m$ graphs is therefore impossible in this regime. This implies the following converse to~\Cref{cor: informative_P}.
 
\begin{theorem} \label{thm: impossibility-isolated-G1}
    Consider the model of $m$ correlated graphs with parameters $P$ and $s$. Let $d_i  = \sum_{j\neq i} p_{ij}$. If $p_{\max} = o(1/\sqrt{n})$ and
    \begin{align} \label{eq: nec-condition-2-graphs-}
    \pbr{\sum_{i=1}^n e^{-d_i s}} - \max_{j\in [n]} e^{-d_j s} = \Omega(1),
    \end{align}
    then for any algorithm the probability of exactly matching all $m$ graphs is $1-\Omega(1)$.
\end{theorem}

\begin{remark}
    If we ask for a given $P,s,m$ whether exact recovery is possible by some algorithm, then implicitly algorithms with access to $P$ should be included because $P$ is deterministic. It remains an open problem to give the sharp recovery threshold for a given general $P,s,m$. Theorem~\ref{thm: achievability-m-graphs} provides a performance floor while Theorem~\ref{thm: impossibility-isolated-G1} provides a performance ceiling. Theorem~\ref{thm: impossibility-m-graphs} shows~\Cref{thm: achievability-m-graphs} is tight for homogeneous $P$ while~\Cref{cor: informative_P} shows~\Cref{thm: impossibility-isolated-G1} is tight for some choices of $P$.
\end{remark}
\begin{remark} \label{rem: impossibility-two-graphs}
    \Cref{lem: impossibility-two-graphs} can also be used to give a second proof of the converse result in the homogeneous setting (i.e. Theorem~\ref{thm: impossibility-m-graphs}) in the special case of $m=2$. Setting $H = G_1\wedge G_2' \sim \IHG(n,\bfP s^2)$ in~\Cref{lem: impossibility-two-graphs}, we have that if $p_{\max} = o(1/\sqrt{n})$ and
    \begin{align} \label{eq: isolated-nodes-int-graph-threshold}
    \pbr{\sum_{i=1}^n e^{-d_i s^2} } - \max_{j\in [n]} e^{-d_j s^2}= \Omega(1), 
    \end{align}
    then with probability bounded away from $0$, there are at least two nodes $u$ and $v$ that are isolated in $G_1 \wedge G_2'$, i.e. $u$ and $v$ both have disjoint sets of neighbors in $G_1$ and $G_2'$. This implies that the maximum likelihood for Erd\H{o}s-R\'{e}nyi graphs, given in Appendix~\ref{apx: MLE}, fails. In the heterogeneous setting, however, the existence of two isolated nodes in the intersection graph $G_1 \wedge G_2'$ does not imply impossibility of exact matching, because the maximum likelihood estimator no longer has this simple form.
\end{remark}

\subsection{Estimation Algorithm}
The algorithm that achieves the guarantee of Theorem~\ref{thm: achievability-m-graphs} is based on the idea of \textit{transitive closure}: First, the graphs are matched pairwise to obtain \textit{partial} matchings $\muhat_{ij}$ between each pair of graphs. These matchings can then be boosted via transitivity. The idea is formalized in Algorithm~\ref{Alg: Match}, which runs in two steps: In step 1, the $k$-core estimator, for a suitable choice of $k$, is used to pairwise match all the graphs. For any $i$ and $j$, the $k$-core estimator selects a permutation $\nuhat_{ij}$ to maximize the size of the $k$-core of $G_i \wedge G_j^{\nuhat_{ij}}$ (the $k$-core of a graph $G$ is the largest subset of vertices $\core_k(G)$ such that the induced subgraph has minimum degree at least $k$). It then outputs a matching $\muhat_{ij}$ by restricting the domain of $\nuhat_{ij}$ to $\core_k(G_i \wedge G_j^{\nuhat_{ij}})$. Note that the matchings $\muhat_{ij}$ need not be complete matchings. In step 2, these partial matchings are \textit{boosted} as follows: If a node $v$ is unmatched between two graphs $G_i$ and $G_j$, then search for a sequence of graphs $G_i, G_{k_1}, \cdots, G_{k_{\ell}}, G_j$ such that $v$ is matched between any two consecutive graphs in the sequence. If such a sequence exists, then extend $\muhat_{ij}$ to include $v$ by transitively matching it from $G_i$ to $G_j$. We emphasize the following characteristics of the algorithm.

\vskip 0.1 in 
\begin{algorithm}[t]
\DontPrintSemicolon
\SetAlgoNlRelativeSize{-1}
\caption{Matching through transitive closure \label{Alg: Match}}
    \SetKwInOut{Inputs}{require}
    \BlankLine
    \Inputs{ Graphs $G_1, G_2,\cdots,G_m$ on a common vertex set $[n]$, Integer $k$}
    \BlankLine
    \mycomment{\textcolor{blue}{Step 1: Pairwise matching using the $k$-core estimator}}
    \BlankLine
    \For{$\{i,j\}$ in $\{1,\cdots,m\}$ such that $i < j$}{
        \BlankLine
        $\nuhat_{ij} \leftarrow \argmax_{\pi} |\core_k\pbr{G_i \wedge G_j^{\pi}}|$\;
        $\muhat_{ij} \leftarrow \nuhat_{ij} \text{ with domain restricted to }\core_k(G_i \wedge G_j^{\nuhat_{ij}})$\;
    }
    \mycomment{\textcolor{blue}{Step 2: Boosting through transitive closure}}
    \BlankLine
    \For{$v \in [n]$}{
        \For{$j = 2,\cdots,m $}{
        \If{ \text{there is a sequence of indices $1=k_1,\cdots,k_{\ell}=j$} in $[m]$ such that $\muhat_{k_{\ell-1},j}\circ \cdots \circ \muhat_{k_2,k_3} \circ \muhat_{1,k_2}(v) = v'$ for some $v' \in [n]$}{ 
        \BlankLine
        Set $\pihat_{1j}(v) = v'$\;
        }
        }
    }
    \BlankLine
    \textbf{return} $\pihat_{12},\cdots,\pihat_{1m}$
\end{algorithm}

\begin{itemize}
    \item Algorithm~\ref{Alg: Match} only ever makes \textit{pairwise} comparisons between graphs. In contrast, the maximum likelihood estimator for the simpler homogeneous setting selects the permutations $\pihat_{12},\cdots,\pihat_{1m}$ simultaneously based on all the graphs. Despite this, Algorithm~\ref{Alg: Match} works down to the threshold for exact recovery.    
    \item The transitive closure sub-routine (step 2) has low computational complexity, $O(m^2 n \log n)$. Further, it can be used to boost the pairwise matchings obtained by \textit{any} estimator in step 1, i.e. it can be used as a black-box bridge between binary graph matching and $m$-ary graph matching whenever partial pairwise matchings are available. Our choice of the $k$-core estimator for step 1 is motivated by its precision, i.e. the $k$-core estimator does not make mistakes with high probability~\citep{cullina2019kcore, racz2023matching}. This is crucial to avoid compounding errors during the boosting step. This precision is quantified formally later in the analysis (see Section~\ref{subsubsec: step 1}).
\end{itemize}



\section[Preliminaries: On the k-core of Inhomogeneous Graphs]{Preliminaries: On the $k$-core of Inhomogeneous Graphs} \label{sec-prelim-m-graphs}
    This section presents useful preliminaries and lemmas for the proof of Theorem~\ref{thm: achievability-m-graphs}. The results are stated for general inhomogeneous graphs and may be of interest more generally beyond the scope of the graph alignment problem. The main objective of this section is to establish that under appropriate conditions, all nodes with sufficiently large degree in an inhomogeneous graph are in its $k$-core with high probability. 

\begin{lemma}
\label{lemma: deg-to-kcore}
    Let $G \sim \IHG(n,\mathbf{P})$ and $d_i = \sum_{j\neq i}p_{ij}$. Suppose there exist $\alpha \in (0,1]$ and $\eps>0$ such that $p_{\max} = o(n^{\alpha/2-\eps - 1})$ and
    \[ \sum_{i=1}^{n} e^{-d_i} = o(n^{-\alpha}).\]
    Then, for any positive integer $k$ and any $v \in [n]$,
    \alns{ 
    \P{\cbr{v\notin \core_k(G)} \cap \cbr{\delta_G(v) \geq k+2/\alpha}} = o(1/n).
    }
\end{lemma}

We build up to the proof of this lemma by establishing some results about the number of low-degree nodes (in Section~\ref{subsec: low-deg}), and about the size of the $k$-core in inhomogeneous random graphs (in Section~\ref{subsec: k-core-size}). These results are then used to prove Lemma~\ref{lemma: deg-to-kcore} in Section~\ref{subsec: Proof-deg-to-kcore}.

\subsection{On low degree nodes in an inhomogeneous graph} \label{subsec: low-deg}

\begin{lemma} \label{lem: stoch-dom-of-Poissons}
    Let $p \in [0,1]$. Consider independent random variables $X$, $Y$ and $Z$ with distributions 
    \alns{ 
    X\sim \mathrm{Bern}(p), \qquad Y\sim \mathrm{Pois}(p), \qquad Z\sim 2\cdot \mathrm{Pois}(p/2).
    }
    Then, for all $t$, the moment generating functions of $X$, $Y$ and $Z$ are ordered as \[M_X(t) \leq M_Y(t) \leq M_Z(t).\]
    
\end{lemma}
\textit{Proof. }
    We have
    \alns{ 
    M_X(t) = 1+p(e^{t}-1) \stackrel{\text{(a)}}{\leq} \exp\pbr{p(e^t-1)} = M_Y(t) \stackrel{\text{(b)}}{\leq} \exp\pbr{p\pbr{\frac{e^{2t}-1}{2}}} = M_Z(t),
    }
    where (a) uses $1+x\leq e^x$ and (b) uses that the function $\pbr{e^{kt}-1}/k$ is increasing in $k$ for all $t$.\hfill $\Box{}$
    
\begin{lemma}\label{lem: stoch-dom-of-Poissons-v2}
    Let $n>0$ and let $X_1,\cdots,X_n$ be independent random variables with $X_i \sim \text{Bern}(p_i)$. Let $m$ be an integer so that $0\leq m\leq n$, and define 
    \alns{ 
    X:= \sum_{i\leq m} X_i + \sum_{j>m}2X_j.
    }
    Then, for any $t< \mathbb{E}X$,
    \alns{ 
    \P{X\leq t} \leq \exp\pbr{-\frac{\mathbb{E}X}{2} + \frac{t}{2} \log\pbr{\frac{e\cdot \mathbb{E}X}{t}}}
    }
\end{lemma}
\textit{Proof. }
    Let $Y_i \sim \text{Pois}(p_i)$ and $Z_i \sim 2\cdot \text{Pois}(p_i/2)$. It follows that
    \alns{ 
    M_X(t) &= \prod_{i=1}^{m} M_{X_i}(t) \cdot \!\!\prod_{j=m+1}^{n} \!\! M_{X_j}(2t) \\
    & \leq \prod_{i=1}^{m}M_{Z_i}(t)\cdot \!\!\prod_{j=m+1}^{n}\!\! M_{Y_j}(2t) \\
    &= \prod_{i=1}^{m} M_{Z_i}(t) \cdot \!\!\prod_{j=m+1}^{n} \!\! M_{2Y_j}(t) 
     = M_Z(t),
    }
    where $Z := \sum_{i=1}^{m}Z_i + \sum_{j=m+1}^{n}2Y_j$ is distributed as $Z\sim 2\cdot \text{Pois}\pbr{\mathbb{E}X/2}$.

    \vskip 0.1in
    \noindent Therefore, Chernoff tail bounds that hold for $Z$ also hold for $X$. In particular, for any $t<\mathbb{E}X$,
    \alns{ 
    \P{X \leq t} \leq \inf_{\theta \geq 0} e^{\theta t}M_X(-\theta) 
    \leq \inf_{\theta \geq 0} e^{\theta t} M_Z(-\theta) = \inf_{\theta\geq 0} \exp\pbr{\theta t + \frac{\mathbb{E}{X}}{2}\pbr{e^{-2\theta}-1} }.
    }
    Setting $\theta = \frac{1}{2}\log\pbr{\mathbb{E}X/t}$ gives the desired result. \hfill $\Box{}$

\begin{lemma} \label{lem: low-deg-nodes-are-few}
    Let $G\sim \IHG(n,\mathbf{P})$, and let $d_i = \sum_{j\neq i}p_{ij}$. Let $r>0$ be an integer and let $Z_r$ denote the set of nodes of $G$ with degree less than or equal to $r$. Suppose that for some $\alpha > 0$, 
    \aln{ 
    \sum_{i=1}^{n} e^{-d_i}= o(n^{-\alpha}). \label{eq: nec-degs-less-r}
    }
    \begin{itemize}
        \item[--] For any $\delta > 1-\alpha/2$,
    \aln{
    \P{|Z_r| > n^{\delta}} = o(1/n). \label{eq: Prob-Zr}
    }
    \item[--] If $p_{\max} = o(n^{\alpha/2-1})$, then~\eqref{eq: Prob-Zr} is true for all $\delta>1-\alpha$.
    \end{itemize}
    
\end{lemma}
\textit{Proof. }
If strictly more than $n^{\delta}$ nodes all have degree less than or equal to $r$, then a subset of exactly $n^{\delta}$ nodes has average degree less than $r$, i.e.
\aln{ 
\P{|Z_r|> n^{\delta}} &= \P{\exists \ T \subseteq [n] \colon \cbr{|T| \geq n^{\delta}} \cap \cbr{ \max_{u\in T} \delta_G(u) \leq r}} \nonumber \\
&\leq \P{\exists \ T \subseteq [n] \colon \cbr{|T| = n^{\delta}} \cap \cbr{ \max_{u\in T} \delta_G(u) \leq r}} \nonumber \\
& \leq \P{\exists \ S \subseteq [n]  \colon \cbr{|S| = n^{\delta}} \cap \cbr{\sum_{u\in S} \delta_{G}(u) \leq r|S|} }.\label{eq: Zr-upper-bound}
}
Let $G_{uv}$ denote the indicator random variable for the presence of edge $\{u,v\}$ in $G$. For any set $S\subseteq [n]$, define
\alns{ 
\mu_{S} := \sum_{i\in S} \mathbb{E}[\delta_G(i)] = \sum_{i\in S} d_i.
}
Since for any $i \in [n]$, we have that
\(
e^{-d_i} \leq \sum_{j =1}^n  e^{-d_j} \leq n^{-\alpha},
\)
it follows that $d_i \geq \alpha \log n$. Therefore, for any $S \subseteq [n]$, we have that $\mu_{S} \geq \alpha |S| \log n$.
Hence,
\aln{ 
\P{\sum_{u\in S} \delta_G(u) \leq r|S|} &\!= \P{\sum_{\substack{u\in S \\ v \in [n] \setminus S}} \!\!\! G_{uv} + \!\! \sum_{\substack{u,v\in S \\ u<v}} 2 G_{uv}\leq r|S| } 
\stackrel{\text{(a)}}{\leq} \exp\pbr{ \!-\frac{\mu_{S}}{2} \!+\! \frac{r|S|}{2}\log\pbr{\frac{e \cdot \mu_{S}}{r|S|}}\! } \nonumber \\
& \stackrel{\text{(b)}}{\leq} \exp\pbr{-\frac{\alpha |S|}{2}\log(n) + \frac{r|S|}{2}\log\pbr{\frac{e \cdot \alpha \log n}{r}}} \label{eq: Zr-conc-ineq}
}
where (a) uses~\Cref{lem: stoch-dom-of-Poissons-v2} and (b) uses that the function on the right hand side of (b) is decreasing in $\mu_{S}$ on the interval $(r|S|,\infty)$, and that $r|S| \leq \alpha |S| \log n \leq \mu_{S}$ for all sufficiently large $n$. Finally, combining~\eqref{eq: Zr-upper-bound} and~\eqref{eq: Zr-conc-ineq}, and using that $|S| = n^{\delta}$ yields
\alns{ 
\P{|Z_r| > n^{\delta}} &\leq \! \! \sum_{S\subseteq [n]: |S| = n^{\delta}} \!\!\!\! \P{ \sum_{u\in S} \delta_G(v) \leq r|S|}  \leq \binom{n}{n^{\delta}} \exp\pbr{- \frac{\alpha n^{\delta}}{2} \log n + \frac{r n^{\delta}}{2}\log\pbr{\frac{e\cdot \alpha \log n}{r}} \!} \\
& \stackrel{\text{(a)}}{\leq} \exp\pbr{ \pbr{1-\delta - \frac{\alpha}{2}}n^{\delta}\log n  + \frac{r n^{\delta}}{2}\log\pbr{\frac{e\cdot\alpha\log n}{r}}} =o(1/n),
}
whenever $\delta>1-\alpha/2$. Here, (a) uses the fact that $\binom{n}{n^{\delta}} \leq \exp\pbr{(1-\delta)n^{\delta}\log n}$. Hence,~\eqref{eq: Prob-Zr} follows. 

\vskip 0.1 in
\noindent Suppose that $p_{\max} = o(n^{\alpha/2-1})$. Notice that
\[ 
\sum_{u\in S} \deg_G(u) = \sum_{u\in S} \sum_{\substack{v \in [n] \\ v \neq u}} G_{uv} \leq \sum_{u\in S} \sum_{\substack{v \in [n] \setminus S}} G_{uv} \triangleq Y_{S},
\]
and so we have that 
\aln{ 
\P{\sum_{u\in S} \deg_G(u) \leq r |S|} \leq \P{Y_{S} \leq r|S|} \stackrel{\text{(a)}}{\leq} e^{-\E{Y_{S}}} \cdot \pbr{\frac{e\cdot \E{Y_{S}}}{r|S|}}^{r|S|}, \label{eq: mono}
}
where (a) uses that $Y_{S}$ is a sum of independent Bernoulli random variables, and so the Chernoff tail bound~\eqref{eq: Mitzenmacher-UB1} can be applied. On the other hand,
\[ 
\E{Y_{S}} = \sum_{u \in S} d_u - \sum_{u \in S}\sum_{\substack{v \in S \\ v \neq u}} 2 p_{uv} \stackrel{\text{(a)}}{\geq} \sum_{u\in S} d_u - \binom{|S|}{2} \cdot (2 p_{\max}) \stackrel{\text{(b)}}{\geq} \alpha |S| \log(n) - |S|^2 \cdot p_{\max} \triangleq \tmu_{S},
\]
where (a) uses that $p_{uv} \leq p_{\max}$ and (b) uses that $\mu_{S} \geq \alpha |S| \log n$. Continuing from~\eqref{eq: mono}, we have
\aln{ 
\P{Y_{S} \leq r|S|} \leq e^{-\E{Y_{S}}} \cdot \pbr{\frac{e\cdot \E{Y_{S}}}{r|S|}}^{r|S|} \leq e^{-\tmu_{S}} \cdot \pbr{\frac{e \cdot \tmu_{S}}{r|S|}}^{r|S|}, \label{eq: PYs2}
}
where the last step uses that the function $x \mapsto e^{-x} \cdot \pbr{\frac{e \cdot x}{r|S|}}^{r|S|}$ is decreasing in $x$ whenever $x \geq r|S|$. Thus, for any $S$ such that $|S| = n^{\delta}$, we have from~\eqref{eq: PYs2} that
\begin{align}
    \P{Y_{S} \leq rn^{\delta}} &\leq \exp\pbr{-n^{\delta} \pbr{\alpha \log(n) - r - n^{\delta} p_{\max}}} \cdot \pbr{ \frac{\alpha \log(n) - n^{\delta} p_{\max}}{r}}^{rn^{\delta}} \nonumber \\
    & = \exp\pbr{-n^{\delta} \pbr{\alpha \log(n) - r -n^{\delta}p_{\max} - r\log\pbr{\frac{\alpha \log(n) - n^{\delta}p_{\max}}{r}}}}.
\end{align}
Combining this with~\eqref{eq: Zr-upper-bound}, a union bound yields that
\begin{align}
    \P{|Z_r| > n^{\delta}} & \!\leq \P{\exists \ S \subseteq [n]: \cbr{|S| = n^{\delta}} \cap \cbr{\sum_{u \in S} \deg_G(u) \leq r|S|}} \leq \binom{n}{n^{\delta}} \P{Y_{S} \leq r n^{\delta}} \nonumber \\
    & \stackrel{\text{(a)}}{\leq} \!\exp\!\pbr{-n^{\delta} \Big[(\alpha\!+\!\delta\!-\!1) \log(n) - n^{\delta}p_{\max} - r \log\Big(\frac{\alpha\log(n) \!-\! n^{\delta}p_{\max}}{r} \Big)-r-1 \Big] }, \label{eq: rhs-fin}
\end{align}
where (a) uses that $\binom{n}{n^{\delta}} \leq \exp\pbr{(1-\delta)n^{\delta}\log n}$. Finally, note that the right-hand side of~\eqref{eq: rhs-fin} is $o(1/n)$ whenever $\alpha + \delta > 1$ and $n^{\delta} p_{\max} = o(\log n)$. Both conditions are satisfied when $p_{\max} = o(n^{\alpha/2-1})$ by choosing $\delta$ such that $1-\alpha < \delta < 1-\alpha/2$. \hfill $\Box{}$

\subsection[On the size of the k-core of an inhomogeneous random graph]{On the size of the $k$-core of an inhomogeneous random graph} \label{subsec: k-core-size}

The following concentration inequalities for a sum of Bernoulli random variables are used frequently in the analysis. 

\begin{lemma} \label{lem: concentration-of-binomial}
Let $X$ be a sum of $n$ independent Bernoulli random variables, and let $\mathbb{E}[X] = np$. Then, 
\begin{enumerate}
    \item For any $\delta > 0$, 
    \aln{ 
    \P{X \geq (1+\delta)np} \leq \pbr{\frac{e^\delta}{(1+\delta)^{1+\delta}}}^{np} \leq \pbr{\frac{e}{1+\delta}}^{(1+\delta)np}. \label{eq: Mitzenmacher-LB}
    }
    \item For any $\delta > 5$,
    \aln{ 
    \P{X \geq (1+\delta)np} \leq 2^{-(1+\delta)np}. \label{eq: Mitzenmacher-2B}
    }
    \item For any $\delta \in (0,1)$,
    \aln{ 
    \P{X \leq (1-\delta)np} \leq \pbr{\frac{e^{-\delta}}{(1-\delta)^{1-\delta}}}^{np} = e^{- np} \cdot \pbr{\frac{e}{1-\delta}}^{(1-\delta)np}. \label{eq: Mitzenmacher-UB1}
    }
\end{enumerate}
\end{lemma}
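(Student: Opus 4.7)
The plan is to prove all three bounds via the standard Chernoff method applied to the moment generating function of $X \sim \mathsf{Bin}(n,p)$. The common starting point is to compute $\E{e^{tX}} = (1-p+p e^t)^n$ and to apply the elementary inequality $1+x \leq e^x$, which yields $\E{e^{tX}} \leq \exp\!\bigl(np(e^t-1)\bigr)$ for all $t \in \R$.

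\textbf{Part 1.} First I would apply Markov's inequality to $e^{tX}$ for $t > 0$ to obtain $\P{X \geq (1+\delta)np} \leq \exp\!\bigl(np(e^t-1) - t(1+\delta)np\bigr)$. Optimizing over $t$ gives the minimizer $t^\star = \ln(1+\delta)$, which upon substitution yields $\P{X \geq (1+\delta)np} \leq \bigl(e^\delta/(1+\delta)^{1+\delta}\bigr)^{np}$. The second inequality in \eqref{eq: Mitzenmacher-LB} then follows from the trivial algebraic fact $e^\delta \leq e^{1+\delta}$, which upon dividing both sides by $(1+\delta)^{1+\delta}$ and raising to the $np$-th power gives $\bigl(e^\delta/(1+\delta)^{1+\delta}\bigr)^{np} \leq \bigl(e/(1+\delta)\bigr)^{(1+\delta)np}$.

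\textbf{Part 2.} Use the second form from Part 1 and observe that when $\delta > 5$ we have $1+\delta > 6 > 2e \approx 5.437$, so $e/(1+\delta) < 1/2$. Raising this to the power $(1+\delta)np$ yields the bound $2^{-(1+\delta)np}$ in \eqref{eq: Mitzenmacher-2B}.

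\textbf{Part 3.} For the lower tail, apply Markov's inequality to $e^{-tX}$ for $t > 0$ to get $\P{X \leq (1-\delta)np} \leq \exp\!\bigl(np(e^{-t}-1) + t(1-\delta)np\bigr)$. Since $\delta \in (0,1)$, the minimizer over $t > 0$ is $t^\star = -\ln(1-\delta) > 0$, which upon substitution yields \eqref{eq: Mitzenmacher-UB1}. There is no real obstacle here; these are textbook Chernoff bounds (as in Mitzenmacher and Upfal), and the entire proof amounts to three applications of Markov's inequality with appropriately chosen exponential tilts, followed by elementary algebraic simplification.
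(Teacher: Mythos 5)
Your proposal is correct and follows exactly the route the paper intends: the paper's proof of this lemma is simply a citation to the standard Chernoff bounds in Theorems 4.4 and 4.5 of Mitzenmacher and Upfal, and your three exponential-tilting arguments (with $t^\star=\ln(1+\delta)$, the observation that $1+\delta>2e$ forces $e/(1+\delta)<1/2$, and $t^\star=-\ln(1-\delta)$ for the lower tail) are precisely the derivations behind those theorems. Nothing further is needed.
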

\textit{Proof. }
All proofs follow from the Chernoff bound and can be found, or easily derived, from Theorems 4.4 and 4.5 of~\citep{mitzenmacher2017}.
\hfill $\Box{}$

Let $G\sim\IHG(n,\mathbf{P})$ and let $k>0$. Let $v$ be any node of $G$ and consider the induced subgraph $G-v$ of $G$ on the vertex set $[n]\setminus \{v\}$. The objective of this section is to show that the $k$-core of $G-v$ is sufficiently large with probability $1-o(1/n)$. Formally, we wish to prove the following.

\begin{lemma} \label{lem: size-of-k-core-of-G-v}
    Let $G\sim \IHG(n,\mathbf{P})$, and let $v$ be a node of $G$. Suppose there exist $\alpha \in (0,1]$ and $\eps >0$ such that $
    p_{\max}= o(n^{\alpha/2 -\eps-1})
    $ and 
    \[ \sum_{i=1}^{n}e^{-d_i}= o(n^{-\alpha}).
    \]
    Then for any $\delta > 1-\alpha$, the size of the $k$-core of $G-v$ satisfies
    \alns{ 
        \P{ \left| \core_k(G-v)\right| < n - 3n^{\delta}} = o(1/n).
    }
\end{lemma}

\vskip 0.1 in 
\begin{algorithm}[t] \label{alg: Luczak}
\DontPrintSemicolon
\SetAlgoNlRelativeSize{-1}
\caption{Łuczak expansion}
    \SetKwInOut{Inputs}{require}
    \BlankLine
    \Inputs{ Graph $G$, Set $U \subseteq V(G)$.}
    \BlankLine
    $U_0 \leftarrow U$\;
    \For{$i = 0,1,2,3, \cdots$}{
        \uIf{ \text{ there exists $u \in V\setminus U_i$ such that $u$ has $3$ or more neighbors in $U_i$}}{
            $U_{i+1} \leftarrow U_i \cup \{u\}$\;
        }
        \Else{
            \textbf{return} $U_i$\;
        }
    }
\end{algorithm}

To this end, consider Algorithm~\ref{alg: Luczak} to identify a subset of the $k$-core, originally proposed by Łuczak~\citep{luczak1991size}. Note that the \textbf{for} loop eventually terminates - the set $V \setminus U_i$ is empty, for example, when $i = n$ for any input set $U$. The key is to realize that the \textbf{for} loop terminates quickly when the input $U$ is $Z_{k+1}$
, i.e. the set of vertices of the input graph $G$ whose degree is $k+1$ or less. Furthermore, for that input, the complement of the set output by the algorithm is contained in the $k$-core. This is formally stated in~\Cref{lem: Luczak} in what follows.

\begin{lemma} \label{lem: Luczak}
Let $G\sim\IHG(n,\mathbf{P})$ and let $v$ be a node of $G$. Let $U_{f}$ be the output of Algorithm~\ref{alg: Luczak} with input graph $G-v$ and  $U = Z_{k+1}$. Then,
\begin{enumerate}
    \item[\text{(a)}] $ U_f^c \subseteq \core_{k}(G-v)$.
    \item[\text{(b)}] If there exist $\alpha\in(0,1]$ and $\eps >0$ such that $p_{\max} = o(n^{\alpha/2-\eps-1})$ and $\sum_{i=1}^{n}e^{-d_i} = o(n^{-\alpha})$, then for any $\delta > 1-\alpha$, 
    \alns{ 
    \P{ |U_f| > 3n^{\delta}} = o(1/n).
    }
\end{enumerate}
\end{lemma}
\textit{Proof. }
(a) The proof is by construction: Since $U_f$ is obtained by adding exactly $f$ nodes to $U_0$, it follows that $U_f^c \subseteq U_0^c = Z_{k+1}^c$, so each node in $U_f^c$ has degree $k+2$ or more in $G-v$. Further, each node in $U_f^c$ has at most $2$ neighbors in $U_f$, else the \textbf{for} loop would not have terminated. Thus, the subgraph of $G-v$ induced on the set $U_f^c$ has minimum degree at least $k$, and the result follows.

\vskip 0.1 in
(b) Let $\delta'$ be such that $1-\alpha < \delta' < \min\cbr{\delta,1-\alpha+2\eps}$. If $|U_f| > 3n^{\delta}$, then either $|U_0| > 3n^{\delta'}$ or there is some $M$ in $\{0,1,\cdots,f\}$ for which $|U_M| = 3n^{\delta'}$. Therefore,
\alns{ 
\P{|U_f| > 3n^{\delta}} &\leq \P{|U_0| > 3 n^{\delta'}} + \P{\exists \ M \in \{0,1,\cdots,f\} \text{ s.t. } |U_M| = 3n^{\delta'}}  \\
& = o(1/n) + \underbrace{\P{\exists \ M \in \{0,1,\cdots,f\} \text{ s.t. } |U_M| = 3n^{\delta'}}}_{(\star)},
}
where the first term is $o(1/n)$ by~\Cref{lem: low-deg-nodes-are-few}. Note that each iteration $i = 0,1,\cdots,M-1$  of the \textbf{for} loop adds exactly $1$ vertex and at least $3$ edges to the subgraph of $G-v$ induced on $U_M$. Therefore, the induced subgraph $G\vert_{U_M}$ has $3n^{\delta}$ vertices and at least $3\pbr{|U_M| - |U_0|}$ edges. 

Let us couple the inhomogeneous random graph $G$ with an Erd\H{o}s-R\'{e}nyi random graph $\widetilde{G}$ on $n$ nodes with marginal distribution $\widetilde{G} \sim \text{ER}(n,p_{\max})$, such that $G$ is a subgraph of $\widetilde{G}$. Thus,
\alns{ 
(\star) & \leq \P{ \exists \text{ subgraph $H \!=\! (W,F)$ of } G-v \text{ s.t. } |W| = 3n^{\delta'} \text{ and } |F| \geq 3 \pbr{3n^{\delta'} - |U_0|} } \\
& \leq \P{|U_0| > n^{\delta'}} + \P{ \exists \text{ subgraph $H = (W,F)$ of } G \text{ s.t. } |W| = 3n^{\delta'} \text{ and } |F| \geq 6n^{\delta'} } \\
& \leq \P{|U_0| > n^{\delta'}} + \P{ \exists \text{ subgraph $\widetilde{H} \!=\! (\widetilde{W},\widetilde{F})$ of } \widetilde{G} \text{ s.t. } |\widetilde{W}| = 3n^{\delta'} \text{ and } |\widetilde{F}| \geq 6n^{\delta'} } \\
& \leq o(1/n) + \underbrace{\binom{n}{3n^{\delta'}} \cdot \P{\Bin\pbr{\binom{3n^{\delta'}}{2}, p_{\max} } > 6n^{\delta'} }}_{(\star \star)},
}
where the last step uses~\Cref{lem: low-deg-nodes-are-few} and a union bound over all possible choices of $W$. Finally, using that $\binom{n}{k}\leq \pbr{\frac{ne}{k}}^k$ and the fact that $\text{Bin}(n,p) \preceq \text{Bin}(n',p')$ whenever $n'\geq n$ and $p'\geq p$ yields
\alns{ 
(\star\star) &\leq \pbr{\frac{n^{1-\delta'} e}{3}}^{3n^{\delta'}}  \P{\Bin\pbr{\frac{9n^{2\delta'}}{2},n^{\alpha/2-\eps-1}} > 6n^{\delta'}}  \stackrel{\text{(a)}}{\leq} (n^{1-\delta'})^{3n^{\delta'}} \times \pbr{ \frac{3 e}{4} \cdot n^{\alpha/2-\eps-(1-\delta')} }^{6 n^{\delta'}} \\
& = \pbr{ \frac{3e}{4} \cdot n^{\alpha/2 -\eps - (1-\delta')/2}}^{6n^{\delta'}} = o(1/n),
}
since $\alpha/2 - \eps - (1-\delta')/2 < 0$. Note (a) uses the concentration bound~\eqref{eq: Mitzenmacher-LB} from~\Cref{lem: concentration-of-binomial}. This concludes the proof of~\Cref{lem: Luczak}.
\hfill $\Box$

Finally,~\Cref{lem: Luczak} directly implies~\Cref{lem: size-of-k-core-of-G-v}.

\subsection[Relating high degree nodes to the k-core]{Relating high degree nodes to the $k$-core: Proof of Lemma~\ref{lemma: deg-to-kcore}} \label{subsec: Proof-deg-to-kcore}

\noindent \textit{Proof of Lemma~\ref{lemma: deg-to-kcore}.}
For any set $A \subseteq [n]$, let $N_v(A)$ denote the number of neighbors of $v$ in $A$, i.e.
\alns{ 
N_v(A) := |\cbr{ u\in A: \{u,v\} \in E(G) }|.
}
Fix $v\in [n]$. Let $G-v$ denote the induced subgraph of $G$ on the vertex set $[n]-\{v\}$. Let $C_v$ denote the $k$-core of $G-v$. Since $C_v$ is a subset of the $k$-core of $G$, it follows that
\alns{ 
\cbr{v\notin \core_k(G)} \subseteq \cbr{N_v(\core_k(G)) \leq k-1} \subseteq \cbr{N_v(C_v) \leq k-1}.
}
Let $\delta \in (1-\alpha,1-\alpha+2\eps)$. It follows that
\aln{ 
\P{\cbr{v\notin \core_k(G)} \cap \cbr{\delta_G(v) \geq k+2/\alpha}} \leq q_1 + q_2, \label{eq: ref-p1-p2}
}
where 
\alns{ 
q_1:= \P{|C_v|< n -3n^{\delta}} \stackrel{\text{(a)}}{=} o(1/n),
}
where (a) follows from Lemma~\ref{lem: size-of-k-core-of-G-v}, and
\aln{ 
q_2 &:= \P{ \cbr{N_v(C_v) \leq k-1} \cap \cbr{ \delta_G(v) \geq k+ 2/\alpha} \cap \cbr{|C_v| \geq n - 3n^{1-\alpha/2}}} \nonumber \\
& \leq \sum_{\substack{A\subseteq V \\ |A| \geq n-3n^{\delta}}} \P{ \cbr{N_v(C_v) < k} \cap \cbr{ \delta_G(v) \geq k+ 2/\alpha} \mid C_v=A } \cdot \P{C_v = A} \nonumber \\
&\leq \max_{\substack{A\subseteq V \\ |A|\geq n-3n^{\delta}}} \P{\cbr{N_v(A)<k}\cap\cbr{\delta_G(v)\geq k+2/\alpha}} \nonumber
\\
& \leq \max_{\substack{A\subseteq V \\ |A|\geq n-3n^{\delta}}} \P{N_v(A^c) \geq 1+2/\alpha}  \label{eq: pause}
}
For all sets $A$ of size larger than $n-3n^{\delta}$, we have that 
\aln{ 
\P{N_v(A^c) \geq 1+2/\alpha} \!=\! \P{\sum_{i\in A^c} G_{vi} \geq 1+2/\alpha} \!\leq \P{\text{Bin}(3n^{\delta}, n^{\alpha/2-1}) \geq 1+2/\alpha}, \label{eq: pause-2}
}
where the last step uses that $\sum_{i\in A^c} G_{vi} \preceq \text{Bin}(|A^c|, p_{\max}) \preceq \text{Bin}(3n^{\delta},n^{\alpha/2-1})$. Combining~\eqref{eq: pause} and~\eqref{eq: pause-2} yields
\alns{ 
q_2 &\leq \P{\text{Bin}(3n^{\delta}, n^{\alpha/2-1}) \geq 1+2/\alpha} 
 \leq \pbr{ \frac{3 \alpha e}{2+\alpha} \cdot n^{-\alpha/2}}^{1+2/\alpha} \!= \text{const.} \times n^{-1-\alpha/2}
= o(1/n).
}
Since both $q_1$ and $q_2$ are $o(1/n)$, the result follows from~\eqref{eq: ref-p1-p2}. 
\hfill $\Box{}$.

\section[Proofs]{Proofs for Main Results} \label{sec-multiple-graphs}
    This section presents proofs for Theorems~\ref{thm: achievability-m-graphs} and~\ref{thm: impossibility-m-graphs}. Section~\ref{sec: algorithm-analysis} presents the intuition behind the analysis of Algorithm~\ref{Alg: Match}, collecting useful lemmas along the way. These lemmas are proved in Section~\ref{subsec: aux-proofs}, after the proof of Theorem~\ref{thm: achievability-m-graphs} in Section~\ref{sec: piecing-it-together}. Finally, the negative result~(\Cref{thm: impossibility-m-graphs}) is proved in Section~\ref{sec: negative-proofs}.

\subsection{Analyzing Algorithm~\ref{Alg: Match}: Intuition behind the proof of Theorem~\ref{thm: achievability-m-graphs}} \label{sec: algorithm-analysis}
Algorithm~\ref{Alg: Match} succeeds if both step 1 and step 2 succeed, i.e.
\begin{enumerate}
    \item Each instance of pairwise matching using the $k$-core estimator is \textit{precise}, i.e. it is correct on its domain with probability $1-o(1)$, i.e. \(\muhat_{ij}(v) = \pistar_{ij}(v) \) for all $v \in \dom(\muhat_{ij})$ for all $i$ and $j$.
    \item For each node $v$ and any two graphs $G_i$ and $G_j$, there is a sequence of graphs such that $v$ can be transitively matched through those graphs between $G_i$ and $G_j$.
\end{enumerate}

\subsubsection{On step 1} \label{subsubsec: step 1} This falls back to the regime of analyzing the performance of the $k$-core estimator in the setting of two graphs. Cullina and co-authors~\citep{cullina2019kcore} showed that the $k$-core estimator is \textit{precise}: For any two correlated Erd\H{o}s-R\'{e}nyi graphs $G_i$ and $G_j$ with $\bfP_{ij} =  C \log(n)/n$ for all $1\leq i< j\leq n$, and constant subsampling probability $s$, the $k$-core estimator with sufficiently large $k$ correctly matches all nodes in $\core_k(G_i' \wedge G_j')$ with probability $1-o(1)$. Building on this analysis, Racz and Sridhar~\citep{racz2023matching} showed that this is true more generally:

\begin{lemma}[Lemma III.4 from~\citep{racz2023matching}] \label{lem: racz}
Let $G_1$ and $G_2$ be correlated inhomogeneous graphs obtained by subsampling each edge independently with probability $s$ from a parent graph $G \sim \IHG(n,\bfP)$. Suppose that there exists some $\alpha \in (0,1]$ and $\eps > 0$ such that $p_{\max} \leq o(n^{\alpha/2 - \eps - 1})$, and let $k > \frac{12}{1-\alpha + 2\eps}$. Then, with probability $1-o(1)$, it holds that the matching $\muhat_k$ output by the $k$-core estimator satisfies
\[ 
\dom(\muhat_k) = \core_k(G_1' \wedge G_2') \text{ and } \muhat_k(v) = \pistar(v) \ \forall \ v \in \dom(\muhat_k).
\]
\end{lemma}

This precision of the $k$-core estimator is critical to the success of Algorithm~\ref{Alg: Match} since it ensures that there are no conflicts during the boosting step. Applying Lemma~\ref{lem: racz} to the setting of $m$ graphs yields that each instance of pairwise matching in step $1$ of Algorithm~\ref{Alg: Match} is indeed precise.

\begin{corollary} \label{cor: pairwise-is-correct}
    Let $\muhat_{ij}$ denote the matching output by the $k$-core estimator on graphs $G_i$ and $G_j$. Under the conditions of~\Cref{lem: racz},
    \alns{ 
    \mathbb{P}(\exists\ 1 \leq i < j \leq m, \text{ and } v \in \core_k(G_i'\wedge G_j') \text{ such that } \muhat_{ij}(v) \neq \pistar_{ij}(v) ) = o(1).
    }
\end{corollary}
\textit{Proof.} Since the number of instances of pairwise matchings is constant whenever $m$ is constant, a union bound yields 
\alns{ 
\mathbb{P}(\exists\ 1 &\leq i < j \leq m, \text{ and } v \in \core_k(G_i'\wedge G_j') \text{ such that } \muhat_{ij}(v) \neq \pistar_{ij}(v) )\\
& \leq \sum_{i=1}^{m} \sum_{j=1}^{m} \P{\muhat_{ij}(v) \neq \pistar_{ij}(v) \text{ for some } v \in \core_k(G_i'\wedge G_j')}  = o(1). 
}
This concludes the proof. \hfill $\Box{}$

\subsubsection{On step 2} The challenging part of the proof is to show that boosting through transitive closure matches all the nodes with probability $1-o(1)$ under the sufficient condition. It is instructive to visualize this using \textit{transitivity graphs}. 

\begin{definition}[Transitivity graph, $\calH(v)$]
For each node $v\in [n]$, let $\calH(v)$ denote the graph on the vertex set $\{g_1,\cdots,g_m\}$ such that an edge $\{g_i,g_j\}$ is present in $\calH(v)$ if and only if $v \in \core_k(G_i'\wedge G_j')$.     
\end{definition}

On the event that each instance of pairwise matching using the $k$-core is correct, the edge $\{g_i, g_j\}$ is present in $\calH(v)$ if and only if $v$ is correctly matched using the $k$-core estimator between $G_i$ and $G_j$, i.e. $\pistar_{1i}(v)$ is matched to $\pistar_{1j}(v)$. Thus, in order for step 2 to succeed (i.e. to exactly match all vertices across all graphs), it suffices that the graph $\calH(v)$ be connected for each node $v\in [n]$. However, studying the connectivity of the transitivity graphs is challenging because in any graph $\calH(v)$, no two edges are independent. This is because the $k$-cores of any two intersection graphs $G_{a}'\wedge G_{b}'$ and $G_{c}' \wedge G_{d}'$ are correlated, because all the graphs $G_a, G_b, G_c$ and $G_d$ are themselves correlated. 
To overcome this, we introduce another graph $\Ht(v)$ that relates to $\calH(v)$ and is amenable to analysis. Let $\deg_G(v)$ denote the degree of a node $v$ in graph $G$.

\begin{definition}
For each node $v\in [n]$, let $\Ht(v)$ denote a complete \textit{weighted} graph on the vertex set $\{g_1,\cdots,g_m\}$ such that the weight on any edge $\{g_i,g_j\}$ is $ \ct_v\pbr{i,j} : =\deg_{G_i' \wedge G_j'}(v).
$
\end{definition}

The relationship between the graphs $\calH(v)$ and $\Ht(v)$ stems from~\Cref{lemma: deg-to-kcore} which relates the degree of node $v$ in $G_i'\wedge G_j'$ to the inclusion of $v$ in $\core_k(G_i'\wedge G_j')$ for each $i$ and $j$. Since the graph $G_i'\wedge G_j' \sim \IHG(n, \bfP s^2)$,~\Cref{lemma: deg-to-kcore} implies that with probability $1-o(1/n)$, if a pair $\{g_i,g_j\}$ has edge weight $\ct_{ij} \geq k + 2/\alpha$ in $\Ht(v)$, then the corresponding edge $\{g_i,g_j\}$ is present in the transitivity graph $\calH(v)$, i.e. $v$ is correctly matched between $G_i$ and $G_j$ in the instance of pairwise $k$-core matching between them. The graph $\calH(v)$ is not connected only if it contains a (non-empty) vertex cut $U \subset \{1,\cdots, m\}$ with no edge crossing between $U$ and $U^c$. For a cut $U$, let $c_v(U)$ denote the number of such crossing edges in $\calH(v)$. Furthermore, define the \textit{cost} of the cut $U$ in $\Ht(v)$ as
\alns{ 
\ct_v(U):=\sum_{i\in U}\sum_{j \in U^c} \ct_v\pbr{i,j}.
}

Lemma~\ref{lemma: deg-to-kcore} is a statement about a single graph, but it can be invoked to prove the following. 

\begin{restatable}{lemma}{thmproxy} \label{lem: proxy-main-f}
    Let $G_1,\cdots,G_m$ be correlated inhomogeneous ranodm graphs from the subsampling model with parameters $\bfP$ and $s$. Suppose there exist $\alpha \in (0,1]$ and $\eps > 0$ such that $p_{\max} =o(n^{\alpha/2- \eps -1})$. Let $v \in [n]$ and let $U$ be a vertex cut of $\{1,\cdots,m\}$ such that $|U| \leq \lfloor m/2\rfloor$. Then, 
    \aln{ 
    \P{ \cbr{c_{v}(U) = 0} \cap \cbr{\ct_{v}(U) > \frac{m^2}{4}\pbr{k+\frac{2}{\alpha }}} } = o(1/n). \label{eq: int-event}
    }
\end{restatable}
The proof of~\Cref{lem: proxy-main-f} is deferred to Section~\ref{subsec: aux-proofs}. It suffices to analyze the probability that the graph $\Ht(v)$ has a cut $U$ such that its cost $\ct_v(U)$ is too small. To that end, we show that the bottleneck arises from vertex cuts of small size. Formally,

\begin{restatable}{lemma}{thmstochdom} \label{lem: summed-stoch-dom}
    Let $G_1,\cdots,G_m$ be correlated inhomogeneous graphs obtained from the subsampling model. Let $v \in [n]$ and let $U_{\ell}$ denote the set $\{1,\cdots,\ell\}$ for $\ell$ in $\{1,\cdots,\lfloor m/2\rfloor\}$. For any vertex cut $U$ of $\{1,\cdots,m\}$, let $\ct_v(U)$ denote its cost in the graph $\Ht(v)$. The following stochastic ordering holds:
    \alns{ 
    \ct_v(U_{1}) \preceq \ct_v(U_{2}) \preceq \cdots \preceq \ct_v(U_{\lfloor m/2\rfloor}).
    }
\end{restatable}

The proof of Lemma~\ref{lem: summed-stoch-dom} is deferred to Appendix~\ref{apx: stoch-dom} due to space constraints. Lemmas~\ref{lem: proxy-main-f} and~\ref{lem: summed-stoch-dom} imply that the tightest bottleneck to the connectivity of $\calH(v)$ is the event that $\ct_v(U_1)$ is below the threshold $r := \frac{m^2}{4}\pbr{k + \frac{2}{\alpha}}$, i.e. the sum of degrees of $v$ over the intersection graphs $(G_1\wedge G_j': j = 2,\cdots,m)$ is less than $r$. This event occurs only if the degree of $v$ is simultaneously less than $r$ in each of the intersection graphs $(G_1 \wedge G_j': j = 2,\cdots,m)$. The last ingredient to the analysis is that under the condition $\sum_{i=1}^{n}e^{-d_i s\pbr{1-(1-s)^{m-1}}} = o(n^{-\alpha})$, this event occurs with probability $o(1/n)$.

\begin{restatable}{lemma}{lempunchline}\label{lem: punchline}
Let $G_1,\cdots,G_m$ be obtained from the subsampling model with parameters $\bfP$ and $s$. Let $d_i =\sum_{j\neq i}p_{ij}$ and $v^* \in \argmin_{v\in V} d_v$. Let $k \geq \frac{12}{1-\alpha+2\eps}$ be constant and let $r \triangleq \frac{m^2}{4}\pbr{k+\frac{2}{\alpha}}$. Suppose there exist $\alpha \in (0,1]$ and $\eps >0$ such that $p_{\max} = o(n^{\alpha/2-\eps-1})$ and 
\begin{align}
\sum_{v \in V \setminus v^*} e^{-d_v s\pbr{1-(1-s)^{m-1}}} = o(n^{-\alpha}).  \label{eq: cond-thm}
\end{align}
 Then 
\alns{ (\star) \triangleq 
\P{ \exists \ v \in V \setminus v^*: \big\{\deg_{G_1\wedge G_2'}(v) \leq r\big\}  \cap \cdots \cap \big\{\deg_{G_1\wedge G_m'}(v) \leq r\big\}} = o(1).
}
\end{restatable}

Consequently, under the sufficient condition~\eqref{eq: suff-condition-m-graphs}, the transitivity graph $\calH(v)$ is connected for each node $v$. The above ideas are used to prove Theorem~\ref{thm: achievability-m-graphs}.

\subsection{Piecing it together: Proof of Theorem~\ref{thm: achievability-m-graphs}} \label{sec: piecing-it-together}

\textit{Proof of~\Cref{thm: achievability-m-graphs}.} Let $\pihat_{12},\cdots,\pihat_{1m}$ denote the output of Algorithm~\ref{Alg: Match} with $k\geq \frac{12}{1-\alpha+2\eps}$. Let $E_1$ denote the event that Algorithm~\ref{Alg: Match} fails to match all $m$ graphs exactly, i.e.
\alns{ 
E_1 = \cbr{\pihat_{12} \ \neq \pistar_{12} } \cup \cdots \cup \cbr{\pihat_{1m} \neq \pistar_{1m}}.
}
First, we show that the output of Algorithm~\ref{Alg: Match} is correct with probability $1-o(1)$ whenever the sufficient condition holds. If the event $E_1$ occurs, then either step 1 failed, i.e. there is a $k$-core matching $\muhat_{ij}$ that is incorrect, or step 2 failed, i.e. at least one of the graphs $\calH(v)$ is not connected. Therefore,
\alns{ 
\P{E_1} &\leq \P{ \bigcup_{i,j} \   \bigcup_{v \in \core_k(G_i' \wedge G_j')} \!\!\!\! \cbr{ \muhat_{ij} \neq \pistar_{ij}} \! } \!+ \P{\bigcup_{v\in [n]} \cbr{\calH(v) \text{ is not connected}}\! } \!\leq o(1) + q,
}
where the last step uses~\Cref{cor: pairwise-is-correct}, and $q$ denotes the probability that a transitivity graph $\calH(v)$ is not connected for some $v$ in $[n]$. For each $\ell$ in the set $\{1,\cdots,\lfloor m/2\rfloor\}$, let $U_{\ell}$ denote the set $\cbr{1,\cdots,\ell}$. Then,
\alns{
q \!&=  \P{ \bigcup_{v\in [n]} \bigcup_{\ell=1}^{\lfloor m/2\rfloor} \cbr{ \exists \ U \subset \{1,\cdots,m\} : |U| = \ell \text{ and } c_v(U) = 0 }}\\
& \leq\!\sum_{\ell = 1}^{\lfloor m/2\rfloor}\! \binom{m}{\ell} \cdot \P{ \exists \ v \in [n]: c_v(U_{\ell}) = 0 } \\
& \leq\! \sum_{\ell = 1}^{\lfloor m/2\rfloor} 
 m^{\ell} \Bigg[ \P{ \exists \ v \in [n]: \ct_v(U_{\ell}) \leq \frac{m^2}{4}\big(k + 2/\alpha\big)} \\
& ~~~~~~~~~~~~~~~~~~~~~ + \P{ \exists \ v \in [n]: \cbr{c_v(U_{\ell}) = 0}\cap\big\{\ct_v(U_{\ell}) > \frac{m^2}{4}\big( k\!+\!2/\alpha\big) \big\} } \Bigg] \\
&\stackrel{\text{(a)}}{\leq}\!  \sum_{\ell=1}^{\lfloor m/2\rfloor} \! m^{\ell} \sbr{\P{ \exists \ v \in [n]: \ct_v(U_{\ell}) \leq \frac{m^2}{4}\big(k \!+\! 2/\alpha\big)\!} \!+\! o\pbr{1}} \\
& \stackrel{\text{(b)}}{\leq}\! o(1) + \sum_{\ell=1}^{\lfloor m/2\rfloor} \!m^{\ell}\cdot  \P{ \exists \ v \in [n]: \ct_v(U_{1}) \leq \frac{m^2}{4}\big(k \!+\! 2/\alpha\big)\!}  \\
& \stackrel{\text{(c)}}{\leq}\! o(1) + \sum_{\ell=1}^{\lfloor m/2\rfloor} m^{\ell} \cdot  \P{ \exists \ v\in [n]: \big\{ \deg_{G_1\wedge G_2'}(v) \leq r \big\} \cap \cdots \cap \big\{ \deg_{G_1\wedge G_m'}(v) \leq r\big\}} \stackrel{\text{(d)}}{\leq} o(1)
}
Here, (a) uses~\Cref{lem: proxy-main-f} and a union bound, and (b) uses the fact that for any $\ell \geq 2$, the random variable $\ct_v(U_{\ell})$ stochastically dominates $\ct_v(U_1)$~(\Cref{lem: summed-stoch-dom}). Lastly,~(c) uses the definition of $\ct_v(U_1)$ and~(d) uses~\Cref{lem: punchline}. Therefore, transitive closure successfully matches all nodes in $[n]$ with probability $1-o(1)$. It follows that $\P{E_2} = o(1)$. This concludes the proof.
\hfill $\Box$

\subsection{Proofs of Lemmas~\ref{lem: proxy-main-f} and~\ref{lem: punchline}} \label{subsec: aux-proofs}

\subsubsection{Proof of Lemma~\ref{lem: proxy-main-f}.} 
For any vertex cut $U$, 
\alns{ 
\cbr{\ct_v(U) > \frac{m^2}{4}\pbr{k+\frac{2}{\alpha}}} &\stackrel{\text{(a)}}{\subseteq} \cbr{\ct_v(U) > |U|\pbr{m-|U|}\pbr{k+\frac{2}{\alpha}}} \\
& = \bigg\{ \sum_{i\in U}\sum_{j\in U^c} \deg_{G_i'\wedge G_j'}(v) > |U|(m-|U|)\bigg(k+\frac{2}{\alpha}\bigg) \bigg\} \\
& \subseteq \bigcup_{i\in U} \bigcup_{j\in U^c}\cbr{ \deg_{G_i' \wedge G_j'}(v) > k+\frac{2}{\alpha}},
}
where (a) uses the fact that the maximum of a set of numbers is greater than or equal to the average. On the other hand,
\alns{ 
\cbr{c_v(U) = 0} = \bigcap_{i\in U} \bigcap_{j\in U^c} \cbr{v\notin \core_k(G_i'\wedge G_j')}.
}
Let $p_1$ denote the probability in the LHS of~\eqref{eq: int-event}. It follows from the union bound that
\alns{ 
p_1 \leq \sum_{i \in U} \sum_{j\in U^c} \P{ \cbr{v\notin \core_k(G_i'\wedge G_j')} \cap \cbr{ \deg_{G_i' \wedge G_j'}(v) > k+\frac{2}{\alpha}} } = o(1/n),
}
since for any choice of $i$ and $j$, the graph $G_i' \wedge G_j' \sim \IHG\pbr{n, \bfP s^2}$. 
\hfill $\Box{}$

\subsubsection{Proof of Lemma~\ref{lem: punchline}} \label{sec: proof-punchline}

If $s=1$, then the graphs $G_1,\cdots,G_m$ are all isomorphic, and
\alns{ 
(\star) = \P{\exists v \in V \setminus v^*: \deg_{G_1}(v) \leq r} \leq \sum_{v\in V \setminus v^*}\P{ \deg_{G_1}(v) = 0} \leq \sum_{i=1}^{n} e^{-d_i} = o(1),
}
and so the result follows. We therefore assume without loss of generality that $s < 1$. Observe that
\begin{align}
(\star) &\leq \P{\bigcup_{r_2=0}^{r} \cdots \bigcup_{r_m=0}^{r} \pbr{\exists \ v \in V\setminus v^*: \big\{\deg_{G_1 \wedge G_2'}(v) = r_2 \big\} \cap \cdots \cap \big\{\deg_{G_1\wedge G_m'}(v) = r_m \big\}}} \nonumber \\
& \leq \sum_{r_2=0}^{r} \cdots \sum_{r_m = 0}^{r} \  \ \sum_{v\in V\setminus v^*} \mathbb{E}_{D_v}\sbr{   \P{ \big\{ \deg_{G_1\wedge G_2'}(v) = r_2 \big\} \cap \cdots \cap \big\{ \deg_{G_1\wedge G_m'}(v) = r_m\big\}\mid \deg_{G_1}(v) = D_v}} \nonumber \\
& \stackrel{\text{(a)}}{=} \sum_{r_2=0}^{r} \cdots \sum_{r_m = 0}^{r} \ \ \sum_{v\in V\setminus v^*} \mathbb{E}_{D_v}\sbr{ \prod_{\ell=2}^m \binom{D_v}{r_{\ell}} s^{r_{\ell}} (1-s)^{D_v - r_{\ell}} \cdot \indicator{r_{\ell} \leq D_v} } \nonumber\\
&\stackrel{\text{(b)}}{\leq} \sum_{r_2=0}^{r} \cdots \sum_{r_m = 0}^{r} \sbr{ \prod_{\ell=2}^{m} \pbr{\frac{e}{r_{\ell}}\cdot\frac{s}{1-s}}^{r_{\ell}}} \cdot \sum_{v\in V\setminus v^*} \mathbb{E}_{D_v}\sbr{ \prod_{\ell=2}^{m} D_v^{r_{\ell}} (1-s)^{D_v}}, \label{eq: stop1}
\end{align}
where (a) uses that the degrees of node $v$ in the graphs $G_1\wedge G_2',\cdots, G_1\wedge G_m'$ are conditionally independent of the degree of $v$ in $G_1$, and that each edge of $G_1$ is independently present in $G_1\wedge G_j'$ with probability $s$. Further, (b) uses that $\binom{n}{k}\leq \pbr{ne/k}^k$ with the implicit definition $(1/0)^0 \triangleq 1$. Note that
\begin{align}
    \sum_{v\in V\setminus v^*} \mathbb{E}_{D_v} \sbr{\prod_{\ell=2}^m D_v^{r_{\ell}} (1-s)^{D_v}} &= \sum_{v\in V\setminus v^*} \pbr{ \sum_{d=0}^{n-1} \P{D_v = d}  \prod_{\ell=2}^{m} d^{r_{\ell}} (1-s)^{d}} \nonumber \\
    &= \sum_{d=0}^{n-1} \pbr{\sum_{v\in V \setminus v^*} \P{D_v = d} \cdot d^{\sum_{\ell=2}^{m}r_{\ell}} \cdot (1-s)^{(m-1)d}}. \label{eq: stop2}
\end{align}
Proceed by splitting the outer summation in~\eqref{eq: stop2} over $d$ at $d^* \triangleq n^{\frac{\alpha}{2(m-1)r}}$. We have
\begin{align}
    \sum_{d=0}^{d^*}  \sum_{v\in V \setminus v^*} &\P{D_v = d} \cdot d^{\sum_{\ell=2}^{m}r_{\ell}} \cdot (1-s)^{(m-1)d}   \leq \sum_{v\in V\setminus v^*}  (d^*)^{\sum_{\ell=2}^{m}r_{\ell}} \sum_{d=0}^{d^*} \P{D_v = d} \cdot (1-s)^{(m-1)d} \nonumber \\
    &\leq (d^*)^{\sum_{\ell=2}^{m}r_{\ell}} \sum_{v\in V\setminus v^*} \mathbb{E}_{D_v}\sbr{(1-s)^{(m-1)D_v}} \nonumber \\
    & \stackrel{\text{(c)}}{=} (d^*)^{\sum_{\ell=2}^{m}r_{\ell}} \sum_{v\in V\setminus v^*} \prod_{u \neq v}\pbr{1-p_{uv}s + p_{uv} s(1-s)^{m-1}} \nonumber \\
    & \stackrel{\text{(d)}}{\leq} (d^*)^{\sum_{\ell=2}^{m}r_{\ell}} \sum_{v\in V\setminus v^*} \exp\bigg(- \sum_{u\neq v} p_{uv}s\pbr{1-(1-s)^{m-1}} \bigg) \nonumber \\
    &\stackrel{\text{(e)}}{\leq} n^{\alpha/2} \cdot \sum_{v\in V\setminus v^*} e^{-d_v s\pbr{1-(1-s)^{m-1}}} = o(1). \label{eq: stop3}
\end{align}
Here, (c) uses the moment generating function of $D_v$, which is distributed as a sum of independent Bernoulli random variables with means $p_{uv}$ respectively. Further, (d) uses that $1-x \leq e^{-x}$ for any $x$, and finally (e) uses the definition of $d^*$ and the condition~\eqref{eq: cond-thm}. The other part of the split sum can be bounded as
\begin{align}
    \sum_{d=d^*}^{n-1}  \sum_{v\in V \setminus v^*} &\P{D_v \!=\! d} \cdot d^{\sum_{\ell=2}^{m}r_{\ell}} \cdot (1-s)^{(m-1)d} \leq \sum_{v\in V\setminus v^*} \sbr{ \max_{d\in [d^*,n]}  d^{\sum_{\ell=2}^m r_{\ell}} \cdot (1-s)^{(m-1)d}}  \P{D_v \geq d^*} \nonumber \\
    &\stackrel{\text{(f)}}{\leq} (d^*)^{\sum_{\ell=2}^{m}r_{\ell}} (1-s)^{(m-1)d^*}\cdot \sum_{v\in V \setminus v^*} \P{D_v \geq d^*} \nonumber \\
    & \stackrel{\text{(g)}}{\leq} n^{1+\alpha/2} \cdot \pbr{(1-s)^{m-1}}^{n^{\frac{\alpha}{2(m-1)r}}} = o(1), \label{eq: stop4}
\end{align}
whenever $s>0$. Here, (f) uses that the function $d \mapsto d^{\sum r_i} (a-s)^{md}$ is decreasing on the interval $[d^*,n]$ for all sufficiently large $n$. Finally, (g) uses the definition of $d^*$ and that $\P{D_v \geq d^*} \leq 1$ for all $v$. Plugging~\eqref{eq: stop3} and~\eqref{eq: stop4} in~\eqref{eq: stop2} yields that $\sum_{v\in V\setminus v^*} \mathbb{E}_{D_v} \sbr{\prod_{\ell=2}^m D_v^{r_{\ell}} (1-s)^{D_v}} = o(1)$. Therefore, continuing from~\eqref{eq: stop1}, 
\begin{align}
    \eqref{eq: stop1} &= \sum_{r_2=0}^r \cdots \sum_{r_m=0}^r \sbr{\prod_{\ell = 2}^{m} \pbr{\frac{e}{r_{\ell}} \cdot \frac{s}{1-s}}^{r_{\ell}}} \cdot o(1) \nonumber \\
    &\leq (r+1)(m-1) \cdot \max_{1\leq r' \leq r }\pbr{\frac{e s}{r'(1-s)}}^{r'(m-1)} \cdot o(1) = o(1), \nonumber
\end{align}
since $r$ and $m$ are both constants independent of $n$. This completes the proof.
\hfill $\Box$

\subsection{Proof of Theorem~\ref{thm:informative_P}}

In this section, we prove~\Cref{thm:informative_P}, which implies that our sufficient condition~\eqref{eq: stronger-suff-condition-m-graphs} is \textit{not} necessary for matching inhomogeneous graphs in general. 

\textit{Proof of Theorem~\ref{thm:informative_P}.} The proof is nonconstructive; the idea is to show that recovery is possible if the parameter matrix $P$ is random with a suitable distribution, and then show that a set of deterministic choices for $P$ satisfying the requirements of the proposition has positive probability (actually probability converging to one) under the distribution of $P.$  Let $G_{-1}$ denote an Erd\H{o}s-R\'{e}nyi graph with parameters $n$ and $q$, where $q$ is to be determined later.   Let $P$ denote the random matrix equal to $p_{\max}$ times the incidence matrix of $G_{-1}.$

 Given $P$, let $G \sim \IHG(n,P)$, let $G'_1, \cdots G'_m$ denote the graphs obtained from $G$ by subsampling each edge with probability $s$, and let $G_k$ be obtained from $G'_k$ by randomly uniformly permuting the labels, for $k\in [m]$. We will consider the algorithm that separately matches $G_{-1}$ to $G_k$ for each $k$, and then matches $G_1,\cdots, G_m$ to each other via transitive closure through $G_{-1}$. For $k\in [m]$, $G'_k$ can be viewed as being obtained from $G_{-1}$ by independently subsampling the edges of $G_{-1}$ with probability $p_{\max}s.$
 Therefore, if $r_{ij} \triangleq \P{ \indicator{e\in E(G_{-1})} = i, \indicator{e\in E(G_k)} = j}$, then
 \begin{align*}
     \left( \begin{array}{cc}
     r_{00}~~ & r_{01} \\ r_{10}~~ & r_{11}
     \end{array}
     \right)
     = \left( \begin{array}{cc}
     1-q &~~ 0 \\ q(1-p_{\max}s)  &~~ qp_{\max}s
     \end{array}
     \right)
 \end{align*}
 It is known~\citep{cullina2017exact, wu2022settling} for correlated  Erd\H{o}s-R\'{e}nyi graphs
 such as this that exact recovery is possible if there is some $\eps' > 0$ such that
\begin{align*}
    n\pbr{\sqrt{r_{00}r_{11}}-\sqrt{r_{01}r_{10}}}^2 \geq (1+\eps')\log n.
\end{align*}
Assume without loss of generality that $\epsilon < 1.$
Let $\epsilon' = \epsilon/4$ and select $q$ so that $qs=\frac{1+ 2\epsilon'}{\log n}.$ With that choice we conclude that the algorithm described exactly matches $G_k$ to $G_{-1}$ with probability $1-o(1)$ for each $k$, and hence matches $G_1,\cdots,G_m$ to each other with probability $1-o(1).$

Let $\delta_{G_{-1}}(i)$ denote the degree of node $i$ in graph $G_{-1}$. Note that $\delta_{G_{-1}}(i) \sim \Bin(n-1,q) \preceq \Bin(n,q)$ for all nodes $i$.  Therefore by the union bound and  Chernoff inequality [Theorem 4.4\citep{mitzenmacher2017}],
\begin{align}
\P{ \bigcup_{i=1}^{n} \cbr{\delta_{G_{-1}}(i) > (1+\eps')n q }  } \leq n  \cdot \P{\Bin\pbr{n,q} {>} (1+\eps') n q} \leq n e^{-nq(\eps')^2/3} = o(1). \label{eq:deg_bnd}
\end{align}
Let  $D_i = p_{\max} \cdot \delta_{G_{-1}}(i)$ for $i\in [n],$ so that $D_i$ is the random version of $d_i.$   Then \eqref{eq:ds_bound}, \eqref{eq:deg_bnd}, and the fact $(1+\epsilon')(1+2\epsilon') < 1+ \epsilon$ imply that
$ \P{ D_i \leq d ~~ \mbox{for}~ i\in [n]}=1-o(1). $
In summary, the intersection of the following three events has probability $1-o(1):$
\begin{enumerate}
    \item $\max_{i,j} P_{ij} = p_{\max} = (\log n)^2/n$.
    \item $G_1, \cdots, G_m$ are exactly matched by the transitive closure algorithm via $G_{-1}$.
    \item  $D_i \leq d ~~ \mbox{for}~ i\in [n]$.
\end{enumerate}
Therefore, sampling from the distribution of $P$ generates with positive probability, a deterministic parameter matrix $P$ satisfying 1-3 above, and for which exact graph matching is possible.
\hfill $\Box{}$

\subsection{On Necessary Conditions: Proofs of Theorem~\ref{thm: impossibility-m-graphs}} \label{sec: negative-proofs}

In this section, we prove Theorem~\ref{thm: impossibility-m-graphs}, i.e. our sufficient condition is also necessary for the homogeneous case. Theorem~\ref{thm: impossibility-m-graphs} has a simple proof following a genie-aided converse argument: reduce the problem to that of matching two graphs by providing extra information to the estimator.

\vskip 0.1 in
\noindent \textit{Proof of Theorem~\ref{thm: impossibility-m-graphs}}. If the correspondences $\pistar_{12},\cdots,\pistar_{1,m-1}$ were provided as extra information to an estimator, then the estimator must still match $G_m$ with the union graph $G_1' \vee G_2' \vee \cdots \vee G_{m-1}'$. This can be viewed as an instance of matching two inhomogeneous graphs obtained by \textit{asymmetric} subsampling: the graph $G_m$ is obtained from a parent graph $G \sim \IHG(n,\bfP)$ by subsampling each edge independently with probability $s_1 := s$, and the graph $\widetilde{G}_{m-1} \triangleq G_1'\vee G_2' \vee \cdots \vee G_{m-1}'$ is obtained from $G$ by subsampling each edge independently with probability $s_2 \triangleq 1-(1-s)^{m-1}$. 
Cullina and Kiyavash~\citep{cullina2017exact} studied this model for matching two graphs: Theorem~2 of~\citep{cullina2017exact} establishes that matching $G_m$ and $\widetilde{G}_{m-1}$ is impossible if $Cs_1s_2 < 1$, or equivalently if $Cs(1-(1-s)^{m-1}) < 1$. This is equivalent to the necessary condition~\eqref{eq: nec-condition-m-graphs}.
\hfill $\Box{}$

\section[Simulation Results]{Simulation Results and Discussion} \label{sec-simulations}
    A result of R\'acz and Sridhar~\citep{racz2023matching} establishes that with high probability, the $k$-core estimator for two inhomogeneous graphs matches only the nodes in the $k$-core of their true intersection graph. This is true even if the matrix $\bfP$ has constant entries that do not depend on $n$. Although the $k$-core estimator itself is not efficient to implement, its output can be efficiently simulated by computing the $k$-core of the true intersection graph. This is possible only because the simulator has access to the ground truth. Under the assumption that the $k$-core estimator outputs the $k$-core of the true intersection graph, we can simulate the performance of Algorithm~\ref{Alg: Match}. Figure~\ref{fig: sims} shows the mean fraction of matched nodes as a function of $m$ before and after transitive closure, for a variety of graph models. For each value of $m$, the $\binom m 2$ pairs of graphs are pairwise matched by simulating the $k$-core estimator, and the average fraction of matched nodes is recorded. The transitive closure subroutine is then applied to the partial matchings and the improvement in the fraction of matched nodes is recorded. In all figures, the shaded region represents the standard deviation for the fraction of matched nodes. Three settings are studied:

\begin{figure}[t]
    \subfigure[Erd\H{o}s-R\'{e}nyi graphs, $(n,p,s,k) = (10^4,0.003,0.8,13)$]{
    \centering
    \includegraphics[width=0.45\textwidth]{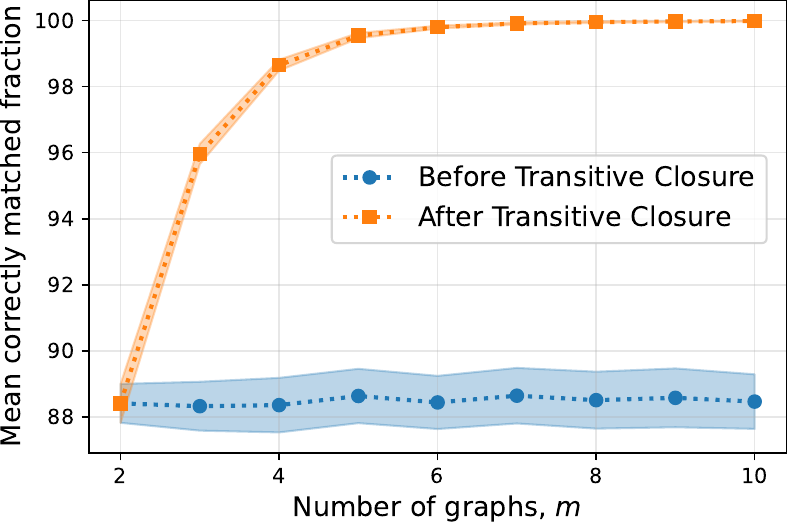}
    \label{fig: ER}
    } \hfill
    \subfigure[SBM: $(n,p,q,s,k) = (10^4, 0.04,0.01,0.25,14)$]{
    \centering
    \includegraphics[width=0.45\textwidth]{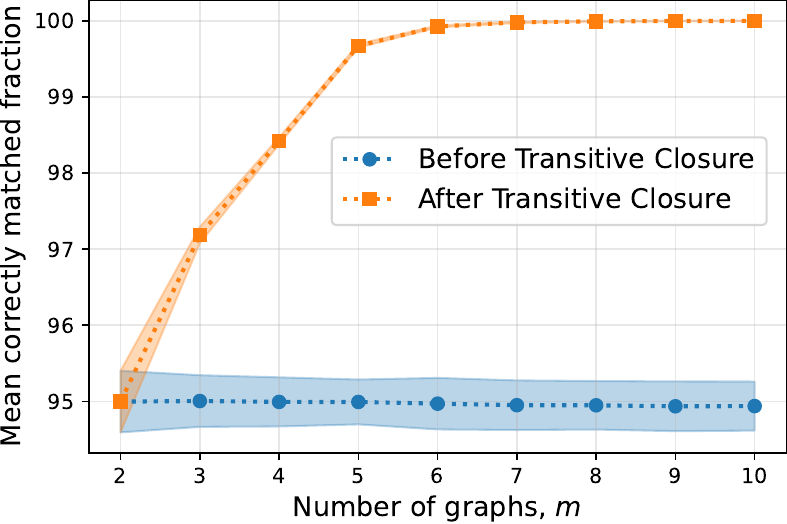}
    \label{fig: SBM}    
    }
    \vfill 
    \centering
    \subfigure[RGG, $(n,p,r,s,k) = (10^4, 0.15, 0.2,0.4,14)$]{
    \includegraphics[width = 0.45 \textwidth]{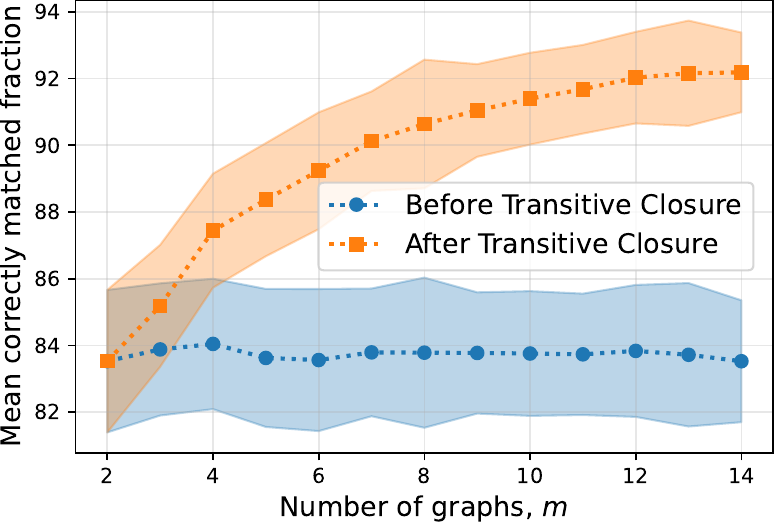} 
    \label{fig: RGG}
    }
    \caption{Mean fraction of correctly matched nodes for various graphs obtained by simulating Algorithm~\ref{Alg: Match}}
    \label{fig: sims}
\end{figure}

\begin{enumerate}
    \item Erd\H{o}s-R\'{e}nyi graphs: Figure~\ref{fig: ER} studies ER graphs on $10^4$ nodes. An edge is present in the parent graph with probability $p = 0.003$ and subsampled independently to the graphs $G_1$ and $G_2$ with probability $0.8$.
    \item Stochastic Block Model: Figure~\ref{fig: SBM} similarly presents the fraction of matched nodes before and after transitive closure for a balanced $5$-community stochastic block model on $10^4$ nodes, with intra-community edge probability $p = 0.04$ and inter-community edge probability $q=0.01$ respectively. The subsampling probability $s = 0.25$.
    \item Random Geometric Graph: Figure~\ref{fig: RGG} considers random geometric graphs generated as follows: First, $n = 10^4$ points are sampled independently and uniformly on the unit square $[0,1]\times [0,1]$. The parent graph $G$ is generated by connecting two points within Euclidean distance $r = 0.2$ with probability $p=0.15$, and the graphs $G_1$ and $G_2$ are obtained by subsampling each edge independently with probability $s = 0.4$.
\end{enumerate}

All data points are obtained by averaging across $50$ independent runs of the algorithm. As Figure~\ref{fig: sims} shows, the transitive closure step boosts the fraction of matched nodes across any two graphs.

A limitation of Algorithm~\ref{Alg: Match} is the runtime -- it does not run in polynomial time because it uses the $k$-core estimator for pairwise matching, which involves searching over the space of permutations. Even so, it is useful to establish the fundamental limits of exact recovery, and serve as a benchmark to compare the performance of any other algorithm. 

The transitive closure subroutine (Step 2) itself is \textit{efficient} because it runs in polynomial time. Therefore, a natural next step is to modify Step 1 in our algorithm so that the pairwise matchings are done by an \textit{efficient} algorithm. However, it is not clear if transitive closure is optimal for combining information from the pairwise matchings in this setting. For example, there is a possibility that the pairwise matchings resulting from the efficient algorithm are heavily correlated, and transitive closure is unable to boost them.

\section{Conclusion} \label{sec-conclusion}
    In this work, we introduced and analyzed matching through transitive closure - a black-box algorithm to combine information from multiple graphs to recover the underlying correspondence between them. Despite its simplicity, it turns out that matching through transitive closure is an optimal way to combine information in the homogeneous setting where the graphs are pairwise matched using the $k$-core estimator. The analysis of this algorithm yielded a sufficient condition to exactly match $m$ inhomogeneous random graphs. Our work presents several directions for future research.

\begin{itemize}
    \item \textbf{Impossibility results for inhomogeneous graphs.} The derivation and analysis of the maximum likelihood estimator for exact recovery can help quantify the sharpness of the sufficient condition~\eqref{eq: stronger-suff-condition-m-graphs}. Can the heterogeneity in the model be exploited to match two graphs even when their true intersection graph is disconnected? 
    \item \textbf{Polynomial-time algorithms.} Using a polynomial-time estimator in place of the $k$-core estimator in Step 1 of Algorithm~\ref{Alg: Match} yields a polynomial-time algorithm to match $m$ graphs. It is critical that this estimator is precise. Can the performance guarantees of the $k$-core estimator be realized through polynomial time algorithms that meet this constraint?
    \item \textbf{Boosting for partial recovery.} This work focused on \textit{exact} recovery, where the objective is to match \textit{all} nodes across \textit{all} graphs. It would be interesting to consider a regime where any instance of pairwise matching recovers at best a small fraction of nodes. Is it possible to quantify the extent to which transitive closure boosts the number of matched nodes?
  \item \textbf{Robustness.} Finally, how sensitive to perturbation is the transitive closure algorithm? Is it possible to quantify the extent to which an adversary may perturb edges in some of the graphs without losing the performance guarantees of the matching algorithm? Algorithms that perform well on models such as inhomogeneous graphs and are further generally robust are expected to also work well with real-world networks. 
\end{itemize}

\bibliographystyle{alpha}
\bibliography{bibliography}

\newcommand{\etalchar}[1]{$^{#1}$}
\begin{thebibliography}{KHGPM16}

\bibitem[Abb18]{abbe2018community}
Emmanuel Abbe.
\newblock Community detection and stochastic block models: recent developments.
\newblock {\em Journal of Machine Learning Research}, 18(177):1--86, 2018.

\bibitem[AH23]{ameen2023robust}
Taha Ameen and Bruce Hajek.
\newblock Robust graph matching when nodes are corrupt.
\newblock {\em arXiv preprint arXiv:2310.18543}, 2023.

\bibitem[BCL{\etalchar{+}}19]{barak2019nearly}
Boaz Barak, Chi-Ning Chou, Zhixian Lei, Tselil Schramm, and Yueqi Sheng.
\newblock ({N}early) efficient algorithms for the graph matching problem on correlated random graphs.
\newblock {\em Advances in Neural Information Processing Systems}, 32, 2019.

\bibitem[BSI06]{bandyopadhyay2006systematic}
Sourav Bandyopadhyay, Roded Sharan, and Trey Ideker.
\newblock Systematic identification of functional orthologs based on protein network comparison.
\newblock {\em Genome research}, 16(3):428--435, 2006.

\bibitem[CK16]{cullina2016improved}
Daniel Cullina and Negar Kiyavash.
\newblock Improved achievability and converse bounds for {E}rd{\H{o}}s-{R}{\'e}nyi graph matching.
\newblock {\em ACM SIGMETRICS performance evaluation review}, 44(1):63--72, 2016.

\bibitem[CK17]{cullina2017exact}
Daniel Cullina and Negar Kiyavash.
\newblock Exact alignment recovery for correlated {E}rd{\H{o}}s-{R}{\'e}nyi graphs.
\newblock {\em arXiv preprint arXiv:1711.06783}, 2017.

\bibitem[CKMP19]{cullina2019kcore}
Daniel Cullina, Negar Kiyavash, Prateek Mittal, and Vincent Poor.
\newblock Partial recovery of {E}rd{\H{o}}s-{R}{\'e}nyi graph alignment via $k$-core alignment.
\newblock {\em Proceedings of the ACM on Measurement and Analysis of Computing Systems}, 3(3):1--21, 2019.

\bibitem[CL02]{chung2002average}
Fan Chung and Linyuan Lu.
\newblock The average distances in random graphs with given expected degrees.
\newblock {\em Proceedings of the National Academy of Sciences}, 99(25):15879--15882, 2002.

\bibitem[CPPDG24]{calissano2024graph}
Anna Calissano, Theodore Papadopoulo, Xavier Pennec, and Samuel Deslauriers-Gauthier.
\newblock Graph alignment exploiting the spatial organization improves the similarity of brain networks.
\newblock {\em Human Brain Mapping}, 45(1):e26554, 2024.

\bibitem[CR24]{chai2024efficient}
Shuwen Chai and Mikl{\'o}s~Z R{\'a}cz.
\newblock Efficient graph matching for correlated stochastic block models.
\newblock {\em arXiv preprint arXiv:2412.02661}, 2024.

\bibitem[DCKG19]{dai2019canonicallabeling}
Osman~Emre Dai, Daniel Cullina, Negar Kiyavash, and Matthias Grossglauser.
\newblock Analysis of a canonical labeling algorithm for the alignment of correlated {E}rd{\H{o}}s-{R}{\'e}nyi graphs.
\newblock {\em Proceedings of the ACM on Measurement and Analysis of Computing Systems}, 3(2):1--25, 2019.

\bibitem[DD23]{ding2023densesubgraph}
Jian Ding and Hang Du.
\newblock Matching recovery threshold for correlated random graphs.
\newblock {\em The Annals of Statistics}, 51(4):1718--1743, 2023.

\bibitem[DFW23]{ding2023efficiently}
Jian Ding, Yumou Fei, and Yuanzheng Wang.
\newblock Efficiently matching random inhomogeneous graphs via degree profiles.
\newblock {\em arXiv preprint arXiv:2310.10441}, 2023.

\bibitem[DL23]{ding2023polynomial}
Jian Ding and Zhangsong Li.
\newblock A polynomial-time iterative algorithm for random graph matching with non-vanishing correlation.
\newblock {\em arXiv preprint arXiv:2306.00266}, 2023.

\bibitem[DMWX21]{ding2021degreeprofile}
Jian Ding, Zongming Ma, Yihong Wu, and Jiaming Xu.
\newblock Efficient random graph matching via degree profiles.
\newblock {\em Probability Theory and Related Fields}, 179:29--115, 2021.

\bibitem[Du25]{du2025optimal}
Hang Du.
\newblock Optimal recovery of correlated {E}rdos-{R}enyi graphs.
\newblock {\em arXiv preprint arXiv:2502.12077}, 2025.

\bibitem[FMWX22]{fan2022spectral}
Zhou Fan, Cheng Mao, Yihong Wu, and Jiaming Xu.
\newblock Spectral graph matching and regularized quadratic relaxations {II}: {E}rd{\H{o}}s-{R}{\'e}nyi graphs and universality.
\newblock {\em Foundations of Computational Mathematics}, pages 1--51, 2022.

\bibitem[GML21]{ganassali2021impossibility}
Luca Ganassali, Laurent Massouli{\'e}, and Marc Lelarge.
\newblock Impossibility of partial recovery in the graph alignment problem.
\newblock In {\em Conference on Learning Theory}, pages 2080--2102. PMLR, 2021.

\bibitem[GRS22]{gaudio2022exact}
Julia Gaudio, Mikl{\'o}s~Z R{\'a}cz, and Anirudh Sridhar.
\newblock Exact community recovery in correlated stochastic block models.
\newblock In {\em Conference on Learning Theory}, pages 2183--2241. PMLR, 2022.

\bibitem[HLL83]{holland1983stochastic}
Paul~W Holland, Kathryn~Blackmond Laskey, and Samuel Leinhardt.
\newblock Stochastic blockmodels: First steps.
\newblock {\em Social networks}, 5(2):109--137, 1983.

\bibitem[HM23]{hall2023partial}
Georgina Hall and Laurent Massouli{\'e}.
\newblock Partial recovery in the graph alignment problem.
\newblock {\em Operations Research}, 71(1):259--272, 2023.

\bibitem[HNM05]{haghighi2005robust}
Aria Haghighi, Andrew~Y Ng, and Christopher~D Manning.
\newblock Robust textual inference via graph matching.
\newblock In {\em Proceedings of Human Language Technology Conference and Conference on Empirical Methods in Natural Language Processing}, pages 387--394, 2005.

\bibitem[HSY24]{huang2024information}
Dong Huang, Xianwen Song, and Pengkun Yang.
\newblock Information-theoretic thresholds for the alignments of partially correlated graphs.
\newblock {\em arXiv preprint arXiv:2406.05428}, 2024.

\bibitem[Ind23]{GWI}
Global~Web Index.
\newblock Social behind the screens trends report.
\newblock {\em GWI}, 2023.

\bibitem[JLK21]{josephs2021recovery}
Nathaniel Josephs, Wenrui Li, and Eric.~D. Kolaczyk.
\newblock Network recovery from unlabeled noisy samples.
\newblock In {\em 2021 55th Asilomar Conference on Signals, Systems, and Computers}, pages 1268--1273, 2021.

\bibitem[KHGPM16]{kazemi2016proper}
Ehsan Kazemi, Hamed Hassani, Matthias Grossglauser, and Hassan Pezeshgi~Modarres.
\newblock Proper: global protein interaction network alignment through percolation matching.
\newblock {\em BMC bioinformatics}, 17(1):1--16, 2016.

\bibitem[LR23]{liu2023phase}
Suqi Liu and Mikl{\'o}s~Z R{\'a}cz.
\newblock Phase transition in noisy high-dimensional random geometric graphs.
\newblock {\em Electronic Journal of Statistics}, 17(2):3512--3574, 2023.

\bibitem[LS18]{lubars2018correcting}
Joseph Lubars and R~Srikant.
\newblock Correcting the output of approximate graph matching algorithms.
\newblock In {\em IEEE INFOCOM 2018-IEEE Conference on Computer Communications}, pages 1745--1753. IEEE, 2018.

\bibitem[{\L}uc91]{luczak1991size}
Tomasz {\L}uczak.
\newblock Size and connectivity of the $k$-core of a random graph.
\newblock {\em Discrete Mathematics}, 91(1):61--68, 1991.

\bibitem[MRT23]{mao2023constant}
Cheng Mao, Mark Rudelson, and Konstantin Tikhomirov.
\newblock Exact matching of random graphs with constant correlation.
\newblock {\em Probability Theory and Related Fields}, 186(1-2):327--389, 2023.

\bibitem[MU17]{mitzenmacher2017}
Michael Mitzenmacher and Eli Upfal.
\newblock {\em Probability and computing: Randomization and probabilistic techniques in algorithms and data analysis}.
\newblock Cambridge University Press, 2017.

\bibitem[MWXY23]{mao2023chandelier}
Cheng Mao, Yihong Wu, Jiaming Xu, and Sophie~H Yu.
\newblock Random graph matching at {O}tter’s threshold via counting chandeliers.
\newblock In {\em Proceedings of the 55th Annual ACM Symposium on Theory of Computing}, pages 1345--1356, 2023.

\bibitem[MX20]{mossel2020seeded}
Elchanan Mossel and Jiaming Xu.
\newblock Seeded graph matching via large neighborhood statistics.
\newblock {\em Random Structures \& Algorithms}, 57(3):570--611, 2020.

\bibitem[NS08]{narayanan2008robust}
Arvind Narayanan and Vitaly Shmatikov.
\newblock Robust de-anonymization of large sparse datasets.
\newblock In {\em 2008 IEEE Symposium on Security and Privacy (sp 2008)}, pages 111--125. IEEE, 2008.

\bibitem[NS09]{narayanan2009deanonymizing}
Arvind Narayanan and Vitaly Shmatikov.
\newblock De-anonymizing social networks.
\newblock In {\em 2009 30th IEEE Symposium on Security and Privacy}, pages 173--187. IEEE, 2009.

\bibitem[PG11]{pedarsani2011privacy}
Pedram Pedarsani and Matthias Grossglauser.
\newblock On the privacy of anonymized networks.
\newblock In {\em Proceedings of the 17th ACM SIGKDD International Conference on Knowledge Discovery and Data Mining}, pages 1235--1243, 2011.

\bibitem[RS21]{racz2021correlated}
Mikl{\'o}s~Z R{\'a}cz and Anirudh Sridhar.
\newblock Correlated stochastic block models: Exact graph matching with applications to recovering communities.
\newblock {\em Advances in Neural Information Processing Systems}, 34:22259--22273, 2021.

\bibitem[RS23]{racz2023matching}
Mikl{\'o}s~Z R{\'a}cz and Anirudh Sridhar.
\newblock Matching correlated inhomogeneous random graphs using the $k$-core estimator.
\newblock In {\em 2023 IEEE International Symposium on Information Theory (ISIT)}, pages 2499--2504. IEEE, 2023.

\bibitem[RZ24]{racz2024harnessing}
Mikl{\'o}s~Z R{\'a}cz and Jifan Zhang.
\newblock Harnessing multiple correlated networks for exact community recovery.
\newblock {\em arXiv preprint arXiv:2412.02796}, 2024.

\bibitem[SS05]{schellewald2005probabilistic}
Christian Schellewald and Christoph Schn{\"o}rr.
\newblock Probabilistic subgraph matching based on convex relaxation.
\newblock In {\em International Workshop on Energy Minimization Methods in Computer Vision and Pattern Recognition}, pages 171--186. Springer, 2005.

\bibitem[STK05]{sporns2005human}
Olaf Sporns, Giulio Tononi, and Rolf K{\"o}tter.
\newblock The human connectome: a structural description of the human brain.
\newblock {\em PLoS computational biology}, 1(4):e42, 2005.

\bibitem[SXB08]{singh2008global}
Rohit Singh, Jinbo Xu, and Bonnie Berger.
\newblock Global alignment of multiple protein interaction networks with application to functional orthology detection.
\newblock {\em Proceedings of the National Academy of Sciences}, 105(35):12763--12768, 2008.

\bibitem[VF13]{vento2013graph}
Mario Vento and Pasquale Foggia.
\newblock Graph matching techniques for computer vision.
\newblock In {\em Image Processing: Concepts, Methodologies, Tools, and Applications}, pages 381--421. IGI Global, 2013.

\bibitem[VM25]{vassaux2025feasibility}
Louis Vassaux and Laurent Massouli{\'e}.
\newblock The feasibility of multi-graph alignment: a {B}ayesian approach.
\newblock {\em arXiv preprint arXiv:2502.17142}, 2025.

\bibitem[WXY22]{wu2022settling}
Yihong Wu, Jiaming Xu, and Sophie~H Yu.
\newblock Settling the sharp reconstruction thresholds of random graph matching.
\newblock {\em IEEE Transactions on Information Theory}, 68(8):5391--5417, 2022.

\bibitem[WZWW24]{wang2024feasible}
Ziao Wang, Ning Zhang, Weina Wang, and Lele Wang.
\newblock On the feasible region of efficient algorithms for attributed graph alignment.
\newblock {\em IEEE Transactions on Information Theory}, 2024.

\bibitem[YC24]{yang2024exact}
Joonhyuk Yang and Hye~Won Chung.
\newblock Exact graph matching in correlated {G}aussian-attributed {E}rd{\H{o}}s-{R}{\'e}nyi mode.
\newblock In {\em 2024 IEEE International Symposium on Information Theory (ISIT)}, pages 3450--3455. IEEE, 2024.

\bibitem[YDF{\etalchar{+}}22]{ye2022multi}
Kai Ye, Siyan Dong, Qingnan Fan, He~Wang, Li~Yi, Fei Xia, Jue Wang, and Baoquan Chen.
\newblock Multi-robot active mapping via neural bipartite graph matching.
\newblock In {\em Proceedings of the IEEE/CVF conference on computer vision and pattern recognition}, pages 14839--14848, 2022.

\bibitem[YYL{\etalchar{+}}16]{yan2016short}
Junchi Yan, Xu-Cheng Yin, Weiyao Lin, Cheng Deng, Hongyuan Zha, and Xiaokang Yang.
\newblock A short survey of recent advances in graph matching.
\newblock In {\em Proceedings of the 2016 ACM on international conference on multimedia retrieval}, pages 167--174, 2016.

\bibitem[ZWWW24]{zhang2024attributed}
Ning Zhang, Ziao Wang, Weina Wang, and Lele Wang.
\newblock Attributed graph alignment.
\newblock {\em IEEE Transactions on Information Theory}, 2024.

\end{thebibliography}

\appendix
\section{Maximum Likelihood Estimator for Exactly Matching Multiple Homogeneous Graphs} \label{apx: MLE}

We show that in the homogeneous setting, the maximum likelihood estimator is given by the permutation profile that minimizes the number of edges in the corresponding union graph, i.e.
\[
    (\pihatMLE_{12},\cdots,\pihatMLE_{1m}) \in \argmin_{\pi_{12},\cdots,\pi_{1m}} |E(G_1 \vee G_2^{\pi_{12}} \vee \cdots \vee G_m^{\pi_{1m}})|.
\]

\begin{lemma}\label{lem: MLE-derivation}
    Let $G_1,\cdots,G_m$ be correlated Erd\H{o}s-R\'{e}nyi graphs obtained from the subsampling model with parameters $p$ and $s$. Further, let $\pi_{12},\cdots,\pi_{1m}$ denote a collection of permutations on $[n]$. Then
    \alns{ 
    \log \P{G_1,\cdots,G_m \mid \pistar_{12} = \pi_{12},\cdots,\pistar_{1m} = \pi_{1m}} \propto \mathrm{const}. - |E(G_1 \vee G_2^{\pi_{12}} \vee \cdots \vee G_m^{\pi_{1m}})|,
    }
    where $\mathrm{const}.$ depends only on $p$, $s$ and $G_1,\cdots,G_m$.
\end{lemma}

\noindent \textit{Proof.} Consider the setting where $p_{ij} = p$ for all $i\neq j$. Notice that
\aln{
\P{G_1,\cdots,G_m \vert \pistar_{12},\cdots,\pistar_{1m}} &= \prod_{e \in \binom{[n]}{2}} \P{G_1(e),G_2(\pistar_{12}(e))\cdots,G_m(\pistar_{1m}(e))\mid \pistar_{12},\cdots,\pistar_{1m}}\nonumber \\
& = \prod_{e \in \binom{[n]}{2}} \P{G_1(e),G_2'(e)\cdots,G_{m}'(e) \label{eq: edge-prob}}
}
where for a node pair $e = \{u,v\}$, the shorthand $\pi(e)$ denotes $\{ \pi(u), \pi(v)\}$. The edge status of any node pair $e$ in the graph tuple $(G_1,G_2',\cdots,G_m')$ can be any of the $2^m$ bit strings of length $m$, but the corresponding probability in~\eqref{eq: edge-prob} depends only on the number of ones and zeros in the bit string. For $i \in [m]$, let $\alpha_i$ denote the number of node pairs $e$ whose corresponding tuple $(G_1(e),G_2'(e),\cdots,G_m'(e)) $ has exactly $i$ 1's:
\alns{ 
\alpha_i := \sum_{e\in \binom{[n]}{2}} \1\cbr{(G_1(e),G_2'(e),\cdots,G_m'(e)) \text{ has exactly } i \ 1\text{'s}}.
}
Two key observations are in order. First, it follows by definition that $\alpha_0 + \alpha_1 + \cdots + \alpha_m = \binom{n}{2}$. Second, by definition of $\alpha_i$, it follows that
\aln{ 
\sum_{i=0}^{m} i \alpha_i = \sum_{e\in\binom{[n]}{2}} \sum_{j=1}^{m} G_j(e) = \sum_{e \in \binom{[n]}{2}} \sum_{j=1}^{m} G'_j(e)\label{eq: const}
}
is constant, independent of $\pistar_{12},\cdots,\pistar_{1m}$. It follows then that
\alns{ 
\eqref{eq: edge-prob} &= \pbr{1-p+p(1-s)^{m}}^{\alpha_0} \times \prod_{i=1}^{m} \pbr{ps^i (1-s)^{m-i}}^{\alpha_i} \\
&= \pbr{1-p+p(1-s)^m}^{\alpha_0} \times p^{\sum_{i=1}^m \alpha_i} \times \prod_{i=1}^m \pbr{s^i (1-s)^{m-i}}^{\alpha_i} \\
& = \pbr{1-p+p(1-s)^m}^{\alpha_0} \times p^{\binom{n}{2}-\alpha_0} \times \pbr{\frac{s}{1-s}}^{\sum_{i=1}^m i \alpha_i} \times (1-s)^{m \sum_{i=1}^m \alpha_i} \\
&=\pbr{\frac{1-p+p(1-s)^m}{p(1-s)^m}}^{\alpha_0} \times \pbr{p(1-s)^m}^{\binom{n}{2}} \times \pbr{\frac{s}{1-s}}^{\sum_{i=1}^{m} i \alpha_i} \\
&\propto \pbr{1+\frac{1-p}{p(1-s)^m}}^{\alpha_0},
}
where the last step uses~\eqref{eq: const}. Finally, since $\frac{1-p}{p(1-s)^m} > 0$, it follows that the log-likelihood satisfies
\alns{ 
\log\pbr{ \P{G_1,\cdots,G_m \ \vert \ \pistar_{12},\cdots,\pistar_{1m} }} \propto \text{const.} + \alpha_0,
}
i.e. maximizing the likelihood corresponds to selecting $\pi_{12},\cdots,\pi_{1m}$ to maximize $\alpha_0$ - the number of node pairs $e$ for which $G_1(e) = G_2(\pi_{12}(e)) = \cdots = G_m(\pi_{1m}(e)) = 0$. This is equivalent to minimizing the number of edges in the union graph $G_1 \vee G_2^{\pi_{12}}\vee \cdots \vee G_m^{\pi_{1m}}$, as desired.
\hfill $\Box{}$

\begin{remark}
    In the case of two graphs, minimizing the number of edges in the union graph $G_1 \vee_{\pi} G_2$ is equivalent to maximizing the number of edges in the intersection graph $G_1 \wedge_{\pi} G_2$. This is consistent with existing literature on two graphs~\citep{cullina2016improved, wu2022settling}.
\end{remark} 

\section{On negative results for exact recovery with two graphs} \label{apx: impossibility-two-graphs}

We build up to the proof of~\Cref{lem: impossibility-two-graphs} by presenting first a useful lemma about inhomogeneous random graphs.

\begin{lemma}\label{lem: second-moment-method}
    Let $G\sim\IHG(n,\bfQ)$ be an inhomogeneous random graph, and let $N \triangleq \binom{K}{2}$, where $K$ is the number of isolated nodes in $G$. Then,
    \aln{ 
    \P{N \geq 1} \geq \exp(-(2n-4)p_{\max}^2 - 6p_{\max}) \cdot \frac{\mu^2 - 2 \mu \max_{1\leq i \leq n}e^{-d_i}}{4(\mu^2 + \mu + 1)},
    }
    where $\mu \triangleq \sum_{i=1}^{n} e^{-d_i}$. 
\end{lemma}

\textit{Proof. }
    Let $Z_i$ denote the indicator random variable that node $i$ is isolated in $G$. Further, let $\tZ_{i} \sim \Bern(e^{-d_i})$ denote a collection of independent random variables and define $\tN \triangleq \sum_{1\leq i< j\leq n} \tZ_i\cdot \tZ_j$. We use the second moment method by relating the moments of $N$ and $\tN$.
    
    \vspace{0.1 in}
    \noindent \textit{Bounding the second moment.} Notice that
    \begin{align}
        \mathbb{E}[N^2] &= \sum_{1\leq i <j \leq n}\   \sum_{1\leq k < \ell \leq n} \mathbb{E}[Z_i Z_j Z_{k} Z_{\ell}]\,, \label{eq: sec-mom-N2}\\ \mathbb{E}[\tN^2] &= \sum_{1\leq i <j \leq n}\   \sum_{1\leq k < \ell \leq n} \mathbb{E}[\tZ_i \tZ_j \tZ_{k} \tZ_{\ell}]. \label{eq: sec-mom-tN2}
    \end{align}
     Consider the following three exhaustive cases. 
    \begin{itemize}
        \item[--] \textit{Case $1$: $i=k$ and $j=\ell$.}  Notice that
        \begin{align*}
            \mathbb{E}[\tZ_i\tZ_j\tZ_k\tZ_{\ell}] = \mathbb{E}[\tZ_i^2]\cdot \mathbb{E}[\tZ_j^2] = e^{-d_i-d_j}.
        \end{align*}
        Therefore, 
        \begin{align}
            \mathbb{E}[Z_i Z_j Z_k Z_{\ell}] \!=\!  (1-p_{ij}) \prod_{\substack{{(i,u)} \\ u \neq j}} (1-p_{iu}) \prod_{\substack{{(j,v)}\\v\neq i}} (1-p_{jv}) \stackrel{\text{(a)}}{\leq} e^{ - d_i - d_j + p_{ij}} 
            \stackrel{\text{(b)}}{\leq} e^{6p_{\max}} \cdot  \mathbb{E}[\tZ_{i} \tZ_{j} \tZ_{k} \tZ_{\ell}],  \label{eq: sec-mom-case1}
        \end{align}     
        where (a) uses that $1-x\leq e^{-x}$ for all $x$, and (b) uses that $p_{ij} \leq p_{\max} \leq 6p_{\max}$.
        
        \item[--] \textit{Case $2$: $j=k$.} Notice that
        \[
        \mathbb{E}[\tZ_i\tZ_j\tZ_k\tZ_{\ell}] = \mathbb{E}[\tZ_i] \cdot \mathbb{E}[\tZ_j^2] \cdot \mathbb{E}[\tZ_{\ell}] = e^{-d_i - d_j - d_{\ell}}.
        \]
        Therefore, similar to case $1$, we have
        \begin{align}
            \mathbb{E}[Z_iZ_jZ_kZ_{\ell}] &= \prod_{f \in \binom{\{i,j,\ell\}}{2}} (1-p_{f})\prod_{\substack{(i,u)\\u\neq j,\ell }}(1-p_{iu}) \prod_{\substack{(j,v)\\v\neq i,\ell }}(1-p_{jv}) \prod_{\substack{(\ell,w)\\w\neq i,j }}(1-p_{\ell w}) \nonumber \\
            & \leq e^{-d_i-d_j-d_{\ell}+p_{ij}+p_{i\ell}+p_{j\ell}} 
            \leq e^{6p_{\max}} \cdot \mathbb{E}[\tZ_{i} \tZ_{j} \tZ_{k} \tZ_{\ell}]  \label{eq: sec-mom-case2}
        \end{align}

        \item[--] \textit{Case $3$: $i \neq j \neq k \neq \ell $.} Notice that
        \begin{align*}
            \mathbb{E}[\tZ_i\tZ_j\tZ_k\tZ_{\ell}] = \mathbb{E}[\tZ_i] \cdot \mathbb{E}[\tZ_j] \cdot \mathbb{E}[\tZ_k] \cdot \mathbb{E}[\tZ_{\ell}] = e^{-d_i - d_j - d_k - d_{\ell}}.
        \end{align*}
        Therefore, similar to case $1$, we have
        \begin{align}
            \mathbb{E}[Z_iZ_jZ_kZ_{\ell}] &= \prod_{f \in \binom{\{i,j,k,\ell\}}{2}} \!\!\!(1-p_{f})\prod_{\substack{(i,u)\\u\neq j,k,\ell }}(1-p_{iu}) \prod_{\substack{(j,v)\\v\neq i,k,\ell }}(1-p_{jv})
            \prod_{\substack{(k,w)\\w\neq i,j,\ell }}(1-p_{kw}) \prod_{\substack{(\ell,x)\\x\neq i,j,k }}(1-p_{\ell x}) \nonumber \\
            & \leq e^{-d_i-d_j-d_k-d_{\ell}+p_{ij}+p_{ik}+p_{i\ell}+p_{jk}+p_{j\ell}+p_{k\ell}} 
            \leq e^{6p_{\max}} \cdot \mathbb{E}[\tZ_{i} \tZ_{j} \tZ_{k} \tZ_{\ell}] \label{eq: sec-mom-case3}
        \end{align}
    \end{itemize}
Consequently,
\begin{align}
    \mathbb{E}[\tN^2] &= \sum_{1\leq i < j\leq n} \sum_{1\leq k < \ell \leq n} \Big[ e^{-d_i-d_j}\indicator{i=k,j=\ell} + e^{-d_i-d_j-d_{\ell}}\indicator{j=k} + e^{-d_i-d_j-d_k-d_{\ell}}\indicator{i\neq j\neq k\neq \ell} \Big] \nonumber \\
    &\leq \sum_{1\leq i < j\leq n} \sum_{1\leq k < \ell \leq n} \Big[ e^{-d_i-d_j} + e^{-d_i-d_j-d_{\ell}} + e^{-d_i-d_j-d_k-d_{\ell}} \Big] 
    \leq \mu^2 + \mu^3 + \mu^4, \label{eq: form-tN2}
\end{align}
where $\mu \triangleq \sum_{1\leq i\leq n}e^{-d_i}$.
Combining equations~\eqref{eq: sec-mom-N2} and~\eqref{eq: sec-mom-tN2} with~\eqref{eq: sec-mom-case1},~\eqref{eq: sec-mom-case2},~\eqref{eq: sec-mom-case3} and~\eqref{eq: form-tN2} yields that
\begin{align}
    \mathbb{E}[N^2] \leq e^{6p_{\max}} \cdot \mathbb{E}[\tN^2] \leq e^{6p_{\max}} \mu^2 (\mu^2 + \mu + 1). \label{eq: N2-bound}
\end{align}

\noindent \textit{Bounding the first moment.} Notice that 
\begin{align}
    \mathbb{E}[\tN] = \sum_{1\leq i<j\leq n} \mathbb{E}[\tZ_i \tZ_j] = \sum_{1\leq i < j \leq n} e^{-d_i - d_j} = \frac{1}{2}\left( \mu^2 - \sum_{i=1}^{n} e^{-2d_i}\right) \geq \frac{\mu}{2}\pbr{\mu - \max_{1\leq i\leq n}e^{-d_i}}, \label{eq: form-tN1}
\end{align}
where the last step uses that $e^{-2d_k} \leq e^{-d_k} \cdot \max_{1\leq i \leq n} e^{-d_i}$ for each $k$.
On the other hand,
\begin{align}
    \mathbb{E}[N] &= \sum_{1\leq i<j\leq n} \E{Z_iZ_j} = \sum_{1\leq i<j\leq n}\Big[ (1-p_{ij}) \prod_{\substack{(i,u)\\ u\neq j}}(1-p_{iu}) \prod_{\substack{(j,v)\\ v\neq i}}(1-p_{jv}) \Big] \nonumber \\
    & \stackrel{\text{(a)}}{\geq} \sum_{1\leq i<j\leq n} \Big[ e^{-p_{ij}-p_{ij}^2} \prod_{\substack{(i,u)\\ u\neq j}} e^{-p_{iu}-p_{iu}^2} \prod_{\substack{(j,v)\\ v\neq i}} e^{-p_{jv}-p_{jv}^2}  \Big]  \nonumber \\
    & \stackrel{\text{(b)}}{\geq} e^{-(2n-1)p_{\max}^2} \cdot \sum_{1\leq i<j\leq n} e^{-d_i -d_j +p_{ij}} 
    \stackrel{\text{(c)}}{\geq} e^{-(2n-1)p_{\max}^2} \cdot \mathbb{E}[\tN], \label{eq: N1-bound}
\end{align}
where (a) uses that $1-x\geq e^{-x-x^2}$ for all $x\in [0,1/2]$, (b) uses that $e^{-p_{ij}^2} \geq e^{-p_{\max}^2}$, and (c) uses that $e^{p_{ij}} \geq 1$ for all $i$ and $j$. Finally, combining equations~\eqref{eq: N2-bound},~\eqref{eq: form-tN1} and~\eqref{eq: N1-bound}, we have that
\begin{align*}
    \frac{\mathbb{E}[N]^2}{\mathbb{E}[N^2]} \geq e^{- (4n-2)p_{\max}^2 - 6p_{\max}}  \frac{\mathbb{E}[\tN]^2}{\mathbb{E}[\tN^2]} \geq \exp\pbr{- (4n-2)p_{\max}^2 - 6p_{\max}} \frac{\mu^2 - 2\mu \max_{1\leq i \leq n} e^{-d_i}}{4(\mu^2 + \mu + 1)}, 
\end{align*}
where the last step uses that 
\[
\mathbb{E}[\tN]^2 \geq \frac{\mu^2}{4}\pbr{\mu^2  - 2 \mu \cdot \max_{1\leq i \leq n}e^{-d_i} + \max_{1\leq i \leq n} e^{-2d_i}} \geq \frac{\mu^2}{4}\pbr{ \mu^2 - 2 \mu \max_{1\leq i \leq n } e^{-d_i} }.
\]
Since $\P{N\geq 1} \geq \frac{\E{N}^2}{\E{N^2}}$, the proof is complete.
\hfill $\Box{}$

\subsection{Proof of Lemma~\ref{lem: impossibility-two-graphs}} \label{apxsub: impossibility-two-graphs}



\noindent \textit{Proof of~\Cref{lem: impossibility-two-graphs}.~~~~}Let $N$ be the number of pairs of isolated nodes in $H$. It follows from Lemma~\ref{lem: second-moment-method} that
    \aln{ 
    \P{N \geq 1}  &\geq \exp\pbr{-(2n-4)\newp_{\max}^2 - 6\newp_{\max}} \cdot \frac{\lambda^2 - 2 \lambda \max_{1\leq i \leq n}e^{-\newd_i}}{4(\lambda^2 + \lambda + 1)} \nonumber \\
    &\geq (1-o(1)) \cdot \frac{\lambda^2 - 2 \lambda \max_{1\leq i \leq n}e^{-\newd_i}}{4(\lambda^2 + \lambda + 1)}, \label{eq: ZXC}
    }
    where $\lambda \triangleq \sum_{i=1}^{n} e^{-\newd_i}$, and the last step in~\eqref{eq: ZXC} uses that $\newp_{\max} = o(1/\sqrt{n})$. Let $\newd_{[1]} \leq \newd_{[2]} \leq \cdots \leq \newd_{[n]}$ denote a sorted copy of the sequence $\newd_1,\cdots,\newd_n$. Consider the following three cases.
    \begin{itemize}
        \item[--] \textit{Case $1$. $\newd_{[1]} = \omega(1)$.} In this setting, it follows that
        \[ 
        \lambda \max_{1\leq i \leq n} e^{-\newd_{i}s(1-(1-s)^{m-1})} = \lambda e^{-\newd_{[1]}s(1-(1-s)^{m-1})} = o(1),
        \]
        since $\lambda = \Omega(1)$. Thus, 
        \[ 
        \P{N \geq 1} \geq (1-o(1)) \cdot \frac{\lambda^2 - o(1)}{4(\lambda^2 + \lambda + 1)} = \Omega(1).
        \]
        \item[--] \textit{Case $2$. $\newd_{[2]} = O(1)$.} In this setting, there exists a constant $C>0$ such that $C \leq e^{-\newd_{[2]}} \leq e^{-\newd_{[1]}}$. Let $A$ denote the adjacency matrix of $G$ and let $\newD_{[1]}$ and $\newD_{[2]}$ denote the degree of two nodes with mean degree $\newd_{[1]}$ and $\newd_{[2]}$ respectively. Then,
        \begin{align*}
            \P{\newD_{[1]} = 0, \newD_{[2]} = 0} &= \P{\newD_{[1]} = 0} \cdot \P{\newD_{[2]} = 0 \mid \newD_{[1]}=0}
            \geq \P{\newD_{[1]} = 0} \P{\newD_{[2]} = 0} \\
            & =  \prod_{i \neq [1]}(1-\newp_{[1]i}) \prod_{j\neq [2]}(1-\newp_{[2]j}) 
             \stackrel{\text{(a)}}{\geq}  \prod_{i \neq [1]} e^{-\newp_{[1]i} - \newp_{[1]i}^2} \prod_{j\neq [2]} e^{-\newp{[2]j}- \newp_{[2]j}^2} \\
            & = e^{-\newd_{[1]}} e^{-\newd_{[2]}} \cdot \exp\pbr{-\sum_{i \neq [1]} \newp_{[1]i}^2 - \sum_{j\neq [2]}\newp_{[2]j}^2} \\
            & \stackrel{\text{(b)}}{\geq} \exp\pbr{-2(n-1)\newp_{\max}^2} \cdot C^2  \stackrel{\text{(c)}}{=} \Omega(1).
        \end{align*}
        Here, (a) uses the inequality $1-x \geq e^{-x-x^2}$ whenever $x\in [0,1/2]$, (b) uses that $e^{-\newp_{ij}^2}  \geq e^{-\newp_{\max}^2}$ for all $i, j$ and also that $C \leq e^{-\newd_{[2]}} \leq e^{-\newd_{[1]}}$. Finally, (c) uses that $\exp\pbr{-2(n-1)\newp_{\max}^2} = \Omega(1)$ whenever $\newp_{\max}=o(1/\sqrt{n})$.

        \item[--] \textit{Case $3$. $\newd_{[2]} = \omega(1)$.} Consider the induced subgraph $G'$ of $G$ on the vertex set $V = V - \{[1]\}$. The mean degrees of $G'$ satisfy $\newd_{[i]}' \leq \newd_{[i]}$, and so it follows that 
        \[ 
        \lambda' \triangleq \sum_{i=2}^{n} e^{-d'_{[i]}} \leq \sum_{i=2}^{n} e^{-\newd_{[i]}} = \Omega(1),
        \]
        and the smallest mean degree in $G'$ diverges, i.e. $\newd_{[2]} = \omega(1)$. Equivalently, the graph $G'$ satisfies the conditions for Case $1$ above, and so there is a constant lower bound $C$ on the probability that the number of \textit{pairs} of isolated nodes $N'$ in $G'$ is at least one. If node $[1]$ does not connect to an isolated pair of nodes in $G'$, then the pair remains isolated in $G$ as well. Therefore,
        \begin{align*}
        \P{N \geq 1} \geq (1-\newp_{\max})^2\cdot \P{N' \geq 1} \geq (1-\newp_{\max})^2 \cdot C = \Omega(1).
        \end{align*}
        In all cases, we have that $\P{N\geq1}=\Omega(1)$, i.e. there is a non-vanishing probability that two isolated nodes exist in $H$. This concludes the proof.\hfill $\Box{}$
    \end{itemize}

\section[Stochastic domination of crossing edges]{Stochastic domination of crossing edges in $\widetilde{\mathcal{H}}(v)$: Proof of Lemma~\ref{lem: summed-stoch-dom}} \label{apx: stoch-dom}
The objective of this section is to build up to a proof of~\Cref{lem: summed-stoch-dom}. We start by making a simple observation about products of Binomial random variables.

\begin{lemma} \label{lem: conditional-stochastic-dominance}
Let $X_1,\cdots,X_m \sim \Bern(s)$ be i.i.d. random variables, and let $B = X_1 + \cdots + X_m$ denote their sum. For each $\ell$ in $\{1,2,\cdots,\lfloor m/2\rfloor\}$, define
\alns{ 
    T_{\ell} = \pbr{X_1 + \cdots + X_{\ell}}\pbr{X_{\ell +1} + \cdots + X_m}.
}
For any $\ell_1, \ell_2 \in \{1,2,\cdots,\lfloor m/2\rfloor\}$ such that $\ell_1 < \ell_2$, and for any $t \in \R$ and any $b \in \{0,1,\cdots,m\}$,
\aln{ 
    \P{ T_{{\ell}_1} > t \mid B = b} \leq \P{ T_{\ell_2} > t \mid B = b}. \label{eq: conditional-stochastic-dominance}
}
\end{lemma}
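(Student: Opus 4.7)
The plan is to work conditionally on $B = b$, under which $(X_1, \dots, X_m)$ is uniform on binary vectors with exactly $b$ ones and $Y_\ell := X_1 + \cdots + X_\ell$ follows $\mathsf{HypGeom}(m, b, \ell)$. Since $T_\ell = Y_\ell(b - Y_\ell)$ depends only on $Y_\ell$ and the map $y \mapsto y(b - y)$ is concave with maximum at $y = b/2$, the event $\{T_\ell > t\}$ equals $\{Y_\ell \in I\}$ for some (possibly empty) integer interval $I = \{c_1, c_1 + 1, \dots, c_2\}$ satisfying $c_1 + c_2 = b$ and $c_1 \leq b/2$. It therefore suffices to show that $\P{Y_\ell \in I \mid B = b}$ is nondecreasing in $\ell$ over $\{1, \dots, \lfloor m/2 \rfloor\}$ for every such symmetric integer interval, and by telescoping this reduces to a single step $\ell \to \ell + 1$ with $\ell + 1 \leq \lfloor m/2 \rfloor$.

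The core calculation applies Pascal's identity twice. Writing $q_\ell(y) := \binom{\ell}{y}\binom{m - \ell}{b - y}$ so that $\P{Y_\ell = y \mid B = b} = q_\ell(y)/\binom{m}{b}$, expanding $\binom{\ell + 1}{y} = \binom{\ell}{y} + \binom{\ell}{y - 1}$ in $q_{\ell+1}(y)$ and $\binom{m - \ell}{b - y} = \binom{m - \ell - 1}{b - y} + \binom{m - \ell - 1}{b - y - 1}$ in $q_\ell(y)$ and simplifying produces the telescoping identity $q_{\ell + 1}(y) - q_\ell(y) = g(y - 1) - g(y)$, where $g(y) := \binom{\ell}{y}\binom{m - \ell - 1}{b - y - 1}$. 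Summing over $y \in I$ collapses everything except the two endpoint terms, yielding
\[
\sum_{y \in I}\bigl[q_{\ell+1}(y) - q_\ell(y)\bigr] \;=\; g(c_1 - 1) - g(c_2) \;=\; \binom{\ell}{c_1 - 1}\binom{m - \ell - 1}{b - c_1} - \binom{\ell}{b - c_1}\binom{m - \ell - 1}{c_1 - 1},
\]
where the second equality invokes the symmetry $c_2 = b - c_1$.

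To show this difference is nonnegative, set $u = c_1 - 1$ and $v = b - c_1$. Then $u \leq v$ because $c_1 \leq b/2$, and the desired inequality becomes $\binom{\ell}{u}\binom{m - \ell - 1}{v} \geq \binom{\ell}{v}\binom{m - \ell - 1}{u}$. The clean way to see this is via the ratio $f(k) := \binom{\ell}{k}/\binom{m - \ell - 1}{k}$, which satisfies $f(k + 1)/f(k) = (\ell - k)/(m - \ell - 1 - k) \leq 1$ as long as $2\ell \leq m - 1$. The hypothesis $\ell + 1 \leq \lfloor m/2 \rfloor$ forces $\ell \leq (m-1)/2$, so $f$ is nonincreasing and $f(u) \geq f(v)$ gives the inequality on the interior of its domain.

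The main obstacle to keep in mind is that a naive sample-path coupling fails here: adding the coordinate $X_{\ell+1}$ moves $T$ by $X_{\ell+1}(b - 1 - 2Y_\ell)$, whose sign fluctuates, so stochastic dominance has to be extracted at the level of probability masses and really exploits the symmetry of $I$ around $b/2$. A secondary concern is a brief edge-case check for degenerate binomials; the only risk is $\mathrm{LHS} = 0$ with $\mathrm{RHS} > 0$, and a short argument shows that this would force either the contradictory bounds $\ell \leq b/2 - 2$ with $\ell \geq b/2$, or else $\ell \geq m/2$, which is excluded by $\ell + 1 \leq \lfloor m/2 \rfloor$; hence the inequality holds in all cases.
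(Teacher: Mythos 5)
Your proof is correct, and it takes a genuinely different (and more general) route than the paper's. Both arguments begin the same way: conditionally on $B=b$, the vector is uniform over arrangements, $Y_\ell = X_1+\cdots+X_\ell$ is hypergeometric, and $T_\ell = Y_\ell(b-Y_\ell)$. The paper then observes that in the main case $b\ge 2$, $0\le t<b-1$ the event $\{T_\ell>t\}$ is exactly $\{1\le Y_\ell\le b-1\}$, evaluates its probability in closed form as $\bigl[\binom{m}{b}-\binom{\ell}{b}-\binom{m-\ell}{b}\bigr]/\binom{m}{b}$ by Vandermonde, and gets monotonicity in $\ell$ from a one-line Pascal computation, dispatching the other ranges of $t$ by separate cases. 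You instead prove the stronger statement that $\P{Y_\ell \in I \mid B=b}$ is nondecreasing in $\ell\le\lfloor m/2\rfloor$ for \emph{every} integer interval $I$ symmetric about $b/2$, via the telescoping identity $q_{\ell+1}(y)-q_\ell(y)=g(y-1)-g(y)$ (double Pascal) and a comparison of the two surviving endpoint terms through the nonincreasing ratio $\binom{\ell}{k}/\binom{m-\ell-1}{k}$; I checked the identity, the endpoint inequality under $2\ell\le m-1$, and your handling of degenerate binomial coefficients, and they are all sound. The paper's route buys brevity in its main case; yours buys uniformity in $t$ and a slightly stronger monotonicity statement. Notably, your argument covers the regime $t\ge b-1$, where for $\ell_1\ge 2$ the left-hand side need not vanish, whereas the paper's Case 2 justification ($T_1\in\{0,b-1\}$, hence the left side is zero) literally applies only when $\ell_1=1$; your uniform treatment closes that small gap. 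Your closing remark that a naive sample-path coupling fails and that the symmetry of $I$ about $b/2$ is what drives the inequality is also accurate.
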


\noindent \textit{Proof of Lemma~\ref{lem: conditional-stochastic-dominance}. } Consider overlapping but exhaustive cases:

\vskip 0.1 in
\noindent \textit{Case $1$: $t < 0$.} Since $T_{\ell} \geq 0$ almost surely for all $\ell$, the inequality~\eqref{eq: conditional-stochastic-dominance} holds.

\vskip 0.1 in 
\noindent \textit{Case $2$: $t \geq b-1$.} Note that conditioned on $B=b$, it follows that $T_{1} \in \{0,b-1\}$. Therefore, the left hand side of~\eqref{eq: conditional-stochastic-dominance} equals zero, and the inequality holds.

\vskip 0.1 in 
\noindent \textit{Case $3$: $b = 0$ or $b=1$.} In this case, $T_\ell$ is identically zero for all $\ell$, so~\eqref{eq: conditional-stochastic-dominance} holds. 

\vskip 0.1 in 
\noindent \textit{Case $4$: $b \geq 2$ and $ 0\leq t< b-1$.} For any $\ell \in \{1,2,\cdots, \lfloor m/2\rfloor\}$, 
\aln{ 
\P{T_{\ell} > t \mid B=b} &= \frac{\P{\cbr{ \pbr{X_1+\cdots+X_{\ell}}\pbr{X_{\ell+1}+\cdots+X_m} > t}\cap\cbr{X_1+\cdots+X_m = b}}}{\P{X_1+\cdots+X_m = b}} \nonumber \\
&= \frac{\sum_{i: i(b-i)>t} \P{ \cbr{X_1+\cdots+X_{\ell} = i}\cap\cbr{X_{\ell+1}+\cdots+X_m = b-i}}}{\P{X_1+\cdots+X_m = b}} \nonumber \\
&\stackrel{\text{(a)}}{=} \frac{\sum_{i=1}^{b-1} \P{X_1+\cdots+X_{\ell} = i}\P{X_{\ell+1}+\cdots+X_m = b-i}}{\P{X_1+\cdots+X_m = b}} \nonumber \\
&\stackrel{\text{(b)}}{=}\frac{\sum_{i=1}^{b-1} \binom{\ell}{i} \binom{m-\ell}{b-i}}{\binom{m}{b}} \nonumber\\
&= \frac{\sum_{i=0}^{b} \binom{\ell}{i} \binom{m-\ell}{b-i} - \binom{m-\ell}{b} - \binom{\ell}{b}}{\binom{m}{b}} \nonumber \\
&= \frac{\binom{m}{b} - \binom{m-\ell}{b} - \binom{\ell}{b}}{\binom{m}{b}}, \label{eq: Prob-Tl}
}
where (a) used the fact that for any $t$ such that $0 \leq t < b-1$, it is true that $$\{i: i(b-i) > t\} = \cbr{1,2,\cdots,b-1}.$$
Here, the notation for binomial coefficients in (b) involves setting $\binom{n}{k} = 0$ whenever $k < 0$ or $k > n$. Let $f_{m,b}(\ell)$ denote the numerator of~\eqref{eq: Prob-Tl}, i.e. 
\alns{ 
f_{m,b}(\ell) := \binom{m}{b} - \binom{m-\ell}{b} - \binom{\ell}{b}
}
It suffices to show that $f_{m,b}(\ell) - f_{m,b}(\ell - 1) \geq 0$ for all $\ell \in \cbr{2,\cdots,\lfloor m/2\rfloor}$. Indeed, 
\alns{ 
f_{m,b}(\ell) - f_{m,b}(\ell - 1) &= \binom{m-\ell + 1}{b} - \binom{m-\ell}{b} - \pbr{ \binom{\ell}{b}-\binom{\ell-1}{b}} \\
& \stackrel{\text{(c)}}{=} \binom{m-\ell}{b-1} - \binom{\ell - 1}{b-1} \geq 0,
}
whenever $m-\ell \geq \ell - 1$, i.e. $\ell \leq \lfloor m/2\rfloor$. Here, (c) uses the identity $\binom{n}{k} = \binom{n-1}{k-1} + \binom{n-1}{k}$, and the fact that $\binom{n_1}{k} \geq \binom{n_2}{k}$ whenever $n_1\geq n_2$. This concludes the proof.
\hfill $\Box$

\begin{corollary} \label{cor: stoch-dom}
Let $F$ be a collection of edges in the parent graph $G$. For any edge $e_r \in F$, let $X^r_i$ denote the indicator random variable $G_i'(e_r) \sim \Bern(ps)$. For each $\ell$ in $\{1,\cdots,\lfloor m/2\rfloor\}$, define
\alns{ 
T_{\ell}^r = (X_1^r + \cdots + X_{\ell}^r)(X_{\ell+1}^r + \cdots + X_m^r).
}
Then, for any $\ell_1,\ell_2 \in \{1,\cdots,\lfloor m/2\rfloor\}$ such that $\ell_1 < \ell_2$, the following stochastic ordering holds
\alns{ 
\sum_{r = 1}^{|F|} T^r_{\ell_1} \preceq \sum_{r = 1}^{|F|} T^r_{\ell_2}. 
}
\end{corollary}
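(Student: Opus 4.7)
The plan is to reduce the claim to a per-edge application of Lemma~\ref{lem: conditional-stochastic-dominance} and then aggregate across the edges of $F$ using independence. For each $e_r \in F$, I would first establish $T_{\ell_1}^r \preceq T_{\ell_2}^r$ as a marginal statement about a single edge, and then promote this to the sum using the fact that the variables $\{T_\ell^r\}_r$ are mutually independent across $r$ (for each fixed $\ell$), which follows from the independence of the edges of the parent graph $G$ and of the subsampling coins used for distinct edges.

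For the per-edge step, the subtlety is that $X_1^r, \ldots, X_m^r$ are not quite i.i.d.\ $\mathsf{Bern}(ps)$ as Lemma~\ref{lem: conditional-stochastic-dominance} requires: they are positively correlated through the parent-graph indicator $Z_r := \1\{e_r \in G\}$. I would handle this by conditioning on $Z_r$. On the event $\{Z_r = 0\}$ every $X_i^r$ vanishes and both $T_{\ell_1}^r$ and $T_{\ell_2}^r$ are identically zero, so the ordering is trivial there. On the event $\{Z_r = 1\}$ the variables $X_1^r, \ldots, X_m^r$ are, by construction of the subsampling model, i.i.d.\ $\mathsf{Bern}(s)$; Lemma~\ref{lem: conditional-stochastic-dominance} then applies verbatim and, after averaging the conditional inequality over $B^r := X_1^r + \cdots + X_m^r$, delivers $\P{T_{\ell_1}^r > t \mid Z_r = 1} \leq \P{T_{\ell_2}^r > t \mid Z_r = 1}$ for every $t \in \R$. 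Mixing the two cases over $Z_r$ yields $T_{\ell_1}^r \preceq T_{\ell_2}^r$ unconditionally, for each $r$.

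For the aggregation step, since the edge statuses $\{Z_r\}_r$ in the parent graph are mutually independent and the subsampling coins used on distinct edges are independent, the families $\{T_{\ell_1}^r\}_r$ and $\{T_{\ell_2}^r\}_r$ each consist of mutually independent random variables. Stochastic dominance is preserved under convolution of independent random variables: by Strassen's theorem I can couple each pair $(T_{\ell_1}^r, T_{\ell_2}^r)$ on a common probability space, independently across $r$, so that the coupled copies $\tilde{T}_{\ell_1}^r \leq \tilde{T}_{\ell_2}^r$ almost surely with the correct marginals; summing these inequalities then gives $\sum_r T_{\ell_1}^r \preceq \sum_r T_{\ell_2}^r$. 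The only point requiring care is the conditioning argument in the second paragraph that reduces the corrrelated Bernoullis to the i.i.d.\ setting covered by Lemma~\ref{lem: conditional-stochastic-dominance}; the remaining steps are standard.
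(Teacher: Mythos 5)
Your proof is correct and follows essentially the same route as the paper: establish the per-edge dominance $T^r_{\ell_1} \preceq T^r_{\ell_2}$ by conditioning on the sum $B$ and invoking Lemma~\ref{lem: conditional-stochastic-dominance}, then aggregate over the independent edges of $F$. Your additional conditioning on the parent-edge indicator $Z_r$ (to reconcile the correlated $\mathsf{Bern}(ps)$ variables with the i.i.d.\ $\mathsf{Bern}(s)$ hypothesis of the lemma) and the explicit coupling argument for sums of independent variables merely spell out steps the paper's proof leaves implicit.
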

\textit{Proof. }
It suffices to show that $T^r_{\ell_1} \preceq T^r_{\ell_2}$ for each $r$, since the edges are independent. For any $t$,
\alns{ 
\P{T^r_{\ell_1} > t} = \sum_{b=0}^{m} \P{B = b} \P{ T_{\ell_1}^r > t | B = b} \leq \sum_{b=0}^{m} \P{B = b} \P{ T_{\ell_2}^r > t | B = b} = \P{T^r_{\ell_2} > t},
}
which concludes the proof.
\hfill $\Box{}$

With this, we are ready to prove~\Cref{lem: summed-stoch-dom}. The lemma is restated for convenience.

\thmstochdom*
\textit{Proof. }
    Let $\ell_1, \ell_2 \in \{1,\cdots,\lfloor m/2\rfloor \}$ such that $\ell_1 < \ell_2$. Let $t \in \R$. Consider the parent graph $G$ and label the set of incident edges on $v$ as $\{e_1,\cdots,e_{\deg_G(v)}\}$. Denote by $X_i^r$ the indicator random variable $G_i'(e_r) \sim \Bern(ps)$. It follows that
    \alns{ 
    \P{\ct_v(U_{\ell_2}) > t} &= \P{ \sum_{i=1}^{\ell_2} \sum_{j=\ell_2 + 1}^m \deg_{G_i'\wedge G_j'}(v) \geq t} \\
    &= \P{\sum_{i=1}^{\ell_2} \sum_{j=\ell_2 +1}^m \sum_{r=1}^{\deg_G(v)} X^r_i X^r_j > t} \\
    &= \sum_{d=0}^n \P{\deg_G(v) = d} \P{\sum_{r=1}^d \pbr{ (X_1^r + \cdots + X_{\ell_2}^r)(X_{\ell_2+1}^r + \cdots + X_m^r)} > t} \\
    & \stackrel{\text{(a)}}{\geq} \sum_{d=0}^n \P{\deg_G(v) = d} \P{\sum_{r=1}^d \pbr{ (X_1^r + \cdots + X_{\ell_1}^r)(X_{\ell_1+1}^r + \cdots + X_m^r)} > t} \\
    & = \P{\sum_{i=1}^{\ell_1} \sum_{j=\ell_1 +1}^m \sum_{r=1}^{\deg_G(v)} X^r_i X^r_j > t} \\
    & = \P{ \sum_{i=1}^{\ell_1} \sum_{j=\ell_1 + 1}^m \deg_{G_i'\wedge G_j'}(v) \geq t}  \\
    & = \P{\ct_v(U_{\ell_1}) > t},
    }
    as desired. Here, (a) uses~\Cref{cor: stoch-dom}.
\hfill $\Box{}$

\end{document}